\def\amsbb{\use@mathgroup \M@U \symAMSb}
\definecolor{darkred}{RGB}{150, 0, 0}
\definecolor{darkgreen}{RGB}{0, 130, 0}
\definecolor{darkblue}{RGB}{0, 0, 200}
\newcommand{\prlparagraph}[1]{\section{#1}}
\newcommand{\nbox}[2][9]{\hspace{#1pt} \mbox{#2} \hspace{#1pt}}
\newtheorem{lem}{Lemma}[section]
\newtheorem{thm}{Theorem}
\newtheorem{cor}{Corollary}[section]
\DeclareMathOperator{\tr}{tr}
\DeclareMathOperator{\id}{id}
\newcommand{\I}{\mathbb{1}}
\newcommand{\X}{\mathsf{X}}
\newcommand{\Y}{\mathsf{Y}}
\newcommand{\Z}{\mathsf{Z}}
\def \diracspacing {0.7pt}
\newcommand{\ket}[1]{| \hspace{\diracspacing} #1 \rangle} 
\newcommand{\ketbra}[2]{| \hspace{\diracspacing} #1 \rangle \langle #2 \hspace{\diracspacing} |} 
\newcommand{\ketbraq}[1]{\ketbra{#1}{#1}} 
\newcommand{\tran}[0]{^\textnormal{\tiny{T}}}
\newcommand{\hc}{^{\dagger}}
\newcommand{\norm}[2][]{#1| \! #1| #2 #1| \! #1|}
\newcommand{\ave}[2][]{#1\langle #2 #1\rangle}
\newcommand{\abs}[2][]{#1| #2 #1|}
\newcommand{\cL}{\mathcal{L}}
\newcommand{\sH}{\mathscr{H}}
\tikzset{>=stealth}
\newcommand{\qval}{4 + \alpha^{2}}
\newcommand{\LambdaB}[1]
{
\begin{equation*}
\Lambda_{B}(\rho) := #1 \ave{ \I, \rho } \, \I + #1 \ave{ E_{\X}, \rho} \, \X + #1 \ave{ E_{\Y}, \rho} \, \Y + #1 \ave{ E_{\Z}, \rho } \, \Z,
\end{equation*}
}
\newcommand{\Eoperators}
{
\begin{align*}
E_{\X} &= \frac{1}{ 4 \q } \big[ 3 ( R + S ) - ( S R S + R S R ) \big],\\
E_{\Y} &= \frac{- i}{2} [R, S],\\
E_{\Z} &= \frac{1}{ 4 \q } \big[ 3 ( R - S ) - ( S R S - R S R ) \big].
\end{align*}
}
\newcommand{\avrh}[2]
{
\ave[#2]{#1, \rho_{AB}}
}
\newcommand{\fnrhh}[1]
{
\norm[\big]{ #1 \rho_{AB}^{1/2} }_{F}
}
\newcommand{\q}{\sqrt{2}}
\newcommand{\lemak}[3]
{
\begin{lem}
\label{#1}
If the observed violation is close to maximal, then #2. More specifically, if $\ave{ W, \rho_{AB} } = 6 - \varepsilon$, then
#3
\end{lem}
}
\newcommand{\AAanticomm}
{
\fnrhh{ \{ A_{0}, A_{1} \} \otimes \I \,} \leq 2 \big( 1 + \q \big) \sqrt{\varepsilon}
}
\newcommand{\AAAanticomm}
{
\fnrhh{ \{ A_{0} + A_{1}, A_{2} \} \otimes \I \,} \leq 2 \big( 4 + \q ) \sqrt{\varepsilon}
}
\begin{document}
\title{A weak form of self-testing}
\author{J\k{e}drzej Kaniewski}
\affiliation{Faculty of Physics, University of Warsaw, Pasteura 5, 02-093 Warsaw, Poland}
\email{jkaniewski@fuw.edu.pl}
\date{\today}
\begin{abstract}
The concept of self-testing (or rigidity) refers to the fact that for certain Bell inequalities the maximal violation can be achieved in an essentially unique manner. In this work we present a family of Bell inequalities which are maximally violated by multiple inequivalent quantum realisations. We completely characterise the quantum realisations achieving the maximal violation and we show that each of them requires a maximally entangled state of two qubits. This implies the existence of a new, weak form of self-testing in which the maximal violation allows us to identify the state, but does not fully determine the measurements. From the geometric point of view the set of probability points that saturate the quantum bound is a line segment. We then focus on a particular member of the family and show that the self-testing statement is robust, i.e.~that observing a non-maximal violation allows us to make a quantitative statement about the unknown state. To achieve this we present a new construction of extraction channels and analyse their performance. For completeness we provide two independent approaches: analytical and numerical. The noise robustness, i.e.~the amount of white noise at which the bound becomes trivial, of the analytical bound is rather small ($\approx 0.06\%$), but the numerical method takes us into an experimentally-relevant regime ($\approx 5\%$). We conclude by investigating the amount of randomness that can be certified using these Bell violations. Perhaps surprisingly, we find that the qualitative behaviour resembles the behaviour of rigid inequalities such as the Clauser--Horne--Shimony--Holt inequality. This shows that rigidity is not strictly necessary for device-independent applications.
\end{abstract}
\maketitle

\prlparagraph{Introduction}In his seminal work Bell showed that performing measurements on spatially-separated quantum systems may give rise to correlations inconsistent with any local-realistic description of the world~\cite{bell64a} (see Ref.~\cite{brunner14a} for a comprehensive review). While the initial motivation for studying Bell nonlocality and performing Bell experiments was to demonstrate, beyond any reasonable doubt, that the world is non-classical, we now understand that Bell nonlocality can be used in a more constructive manner. If we assume that the systems under consideration are governed by quantum mechanics, one can use the observed correlations to draw conclusions about their inner workings, a phenomenon known as \emph{device-independent certification} of quantum devices. Quite surprisingly, in some cases one can almost completely determine the state and measurements under consideration. First such statements can be traced back to the early works of Tsirelson~\cite{tsirelson87a, tsirelson93a}, Summers and Werner~\cite{summers87a} and Popescu and Rohrlich~\cite{popescu92a} and this phenomenon is now referred to as \emph{self-testing}~\cite{mayers98a, mayers04a} or \emph{rigidity}~\cite{reichardt13a}. The simplest and most well-known example concerns the famous Clauser--Horne--Shimony--Holt (CHSH) inequality~\cite{clauser69a}: if we observe the maximal violation of the CHSH inequality, we must be measuring a maximally entangled state of two qubits using anticommuting observables~\cite{tsirelson87a, summers87a, tsirelson93a, popescu92a}.

In a tomographic scenario we use a trusted measurement device to characterise an unknown quantum state (or vice versa) and in such a scenario a complete description of the unknown object can be obtained. In a device-independent scenario we trust neither the state nor the measurements, which imposes certain limitations on how much information we can hope to extract. As we have no information about the dimension of our system, we can never rule out the presence of additional degrees of freedom on which the measurements act trivially. Similarly, since there are no preferred local reference frames, we can only hope to characterise the system up to local unitaries. These two ambiguities are always present in the device-independent setting and any self-testing statement must account for them. A third ambiguity arises when the quantum realisation is \emph{chiral}, i.e.~it is not unitarily equivalent to its own transpose (we take the transpose in some fixed product basis). For instance the ordered set of three observables given by the Pauli matrices $( \X, \Y, \Z )$ is not unitarily equivalent to $( \X\tran, \Y\tran, \Z\tran )$. Several scenarios involving chiral realisations have been studied~\cite{mckague11a, kaniewski17a, andersson17a, kaniewski19b} and there the transpose ambiguity must be explicitly added to the list of allowed equivalences. Since we consider the transpose to be as natural and well-understood as the other two equivalences, we still refer to such a characterisation as self-testing.

By now several classes of self-testing statements have been derived~\cite{bardyn09a, mckague14a, mckague12a, yang13a, bamps15a, mckague16a, wang16a, supic16a, mckague17a, coladangelo17a, kalev17a, andersson17a, supic18a, coladangelo17c, kaniewski19b, sarkar19a} and all of them exhibit the same structure: observing some strongly non-classical correlations implies that particular local measurements are performed on a specific entangled state (up to the equivalences mentioned above). In some cases these statements have been made robust, which allows us to draw non-trivial conclusions in the presence of a realistic level of noise~\cite{bancal15a, yang14a, pal14a, wu16a, kaniewski16b, coopmans19a}. See Ref.~\cite{supic19a} for a recent review on self-testing.

In addition to its foundational importance self-testing has immediate applications to cryptography: if the Bell violation alone essentially determines the quantum realisation, one can certify that the randomness generated in the experiment is intrinsically quantum and cannot be known to an external eavesdropper. This is precisely the idea behind \emph{device-independent cryptography}~\cite{barrett05b, acin06a, colbeck06a, acin07a, pironio10a} (see Refs.~\cite{ekert14a, acin16b, bera17a} for reviews on various aspects of device-independent cryptography and randomness in quantum physics).

In this work we prove the existence of a new, weak form of self-testing. We study a 1-parameter family of Bell inequalities and show that observing the maximal violation certifies the presence of a maximally entangled state of two qubits even though the measurements cannot be uniquely determined. To understand how this phenomenon is affected by noise, we focus on a particular member of the family and derive an analytic robust self-testing result for the state. Since the analytic statement can only tolerate a small amount of noise, we also compute numerical bounds using the ``swap method''~\cite{bancal15a, yang14a}, which turn out to be significantly stronger. Finally, we study the amount of randomness that can be certified from the observed violation.

\prlparagraph{A family of Bell functionals}
Given a measurement with two outcomes $\{ F_{0}, F_{1} \}$ we associate the outcomes with values ${\pm 1}$, which gives rise to the observable $A := F_{0} - F_{1}$. We denote the observables of Alice and Bob by $A_{x}$ and $B_{y}$, respectively. In the bipartite scenario with three settings and two outcomes we consider a family of Bell functionals defined as
\begin{equation}
\label{eq:Bell-functional}
\begin{aligned}
\beta := &\ave{A_{0} B_{0}} + \ave{A_{0} B_{1}} + \alpha \ave{A_{0} B_{2}} + \ave{A_{1} B_{0}}\\
&+ \ave{A_{1} B_{1}} - \alpha \ave{A_{1} B_{2}} + \alpha \ave{A_{2} B_{0}} - \alpha \ave{A_{2} B_{1}},
\end{aligned}
\end{equation}
where $\alpha \in [0, 2]$ is a parameter and $\ave{A_{x} B_{y}}$ denotes the expectation value of the product of the outcomes. Note that for $\alpha = 1$ this is precisely the correlation part of the $I_{3322}$ Bell functional~\cite{froissart81a, collins04a}. It is easy to check that for this Bell functional the largest value achievable by local-realistic models equals $\beta_{L} = 4 \max \{1, \alpha \}$, whereas quantum systems can achieve the value of $\beta_{Q} = \qval$. For our purposes we are only interested in Bell functionals that satisfy $\beta_{L} < \beta_{Q}$ (Bell functionals satisfying $\beta_{L} = \beta_{Q}$ cannot be used to certify quantum properties as they do not allow us to rule out a local-realistic description of the system), so from now on we restrict our attention to the case of $\alpha \in (0, 2)$. It turns out that in those cases the quantum value can be achieved in multiple (inequivalent) ways: a 1-parameter family of quantum realisations for $\alpha = 1$ was presented in Ref.~\cite{goh18a} and can be straightforwardly generalised to all $\alpha \in (0, 2)$. This family is based on the maximally entangled state of two qubits $\ket{\Phi^{+}} = \frac{1}{\q} ( \ket{00} + \ket{11} )$ and gives rise to a line segment in the space of probability distributions. Hence, it serves as a simple example of a non-trivial (in the sense that $\beta_{L} < \beta_{Q}$) Bell functional that does not have a unique maximiser in the quantum set.

Until this work it was not known whether this Bell functional (1) admits additional maximisers that do not belong to the line segment and (2) exhibits some weak form of self-testing. In this work we answer both of these questions.

\prlparagraph{Exact self-testing}
Writing out the Bell operator gives
\begin{equation}
\label{eq:Bell-operator}
\begin{aligned}
W &= A_{0} \otimes ( B_{0} + B_{1} + \alpha B_{2} )\\
&+ A_{1} \otimes ( B_{0} + B_{1} - \alpha B_{2} ) + \alpha A_{2} \otimes ( B_{0} - B_{1} ).
\end{aligned}
\end{equation}
Since the Bell functional contains only correlators, the quantum value can be computed by solving a semidefinite program~\cite{wehner06a} and the dual solution can be turned into a sum-of-squares decomposition of the Bell operator~\cite{doherty08a}. Indeed, it is easy to verify that
\begin{align*}
2 W &= ( 2 A_{0}^{2} + 2 A_{1}^{2} + \alpha^{2} A_{2}^{2} ) \otimes \I + \I \otimes ( B_{0}^{2} + B_{1}^{2} + \alpha^{2} B_{2}^{2} )\\
&- \sum_{j = 0}^{2} L_{j}^{2},
\end{align*}
where
\begin{align*}
L_{0} &= ( A_{0} + A_{1} ) \otimes \I - \I \otimes ( B_{0} + B_{1} ),\\
L_{1} &= ( A_{0} - A_{1} ) \otimes \I - \alpha \I \otimes B_{2},\\
L_{2} &= \alpha A_{2} \otimes \I - \I \otimes ( B_{0} - B_{1} ).
\end{align*}
Since we do not a priori assume that the measurements are projective, we do not replace $A_{x}^{2}$ and $B_{y}^{2}$ by identity operators. Nevertheless, we still have $A_{x}^{2} \leq \I, B_{y}^{2} \leq \I$, which immediately implies that $W \leq (\qval) \, \I \otimes \I$. To see that this bound can be saturated consider the maximally entangled two-qubit state $\ket{\Phi^{+}}$ and the observables
\begin{align*}
A_{0} &= B_{0} = \cos \theta_{\alpha} \X + \sin \theta_{\alpha} \Z,\\
A_{1} &= B_{1} = \cos \theta_{\alpha} \X - \sin \theta_{\alpha} \Z,\\
A_{2} &= B_{2} = \Z,
\end{align*}
where $\theta_{\alpha} := \arcsin( \alpha / 2 )$ (note that the range $\alpha \in (0, 2)$ corresponds to $\theta_{\alpha} \in (0, \pi/2)$). Our goal now is to characterise all quantum realisations that achieve the maximal quantum value. To do so we use a method proposed originally in Ref.~\cite{popescu92a}, which proceeds in 4 steps: (1) find algebraic relations satisfied by the local observables, (2) explicitly characterise the local observables, (3) construct the Bell operator and (4) diagonalise it.

Let $\rho_{AB}$ be an arbitrary bipartite state $\rho_{AB}$ and $A_{x}, B_{y}$ be arbitrary binary observables. Since the form of the local observables outside of the reduced states $\rho_{A}$ and $\rho_{B}$ has no influence on the statistics, no statements can be made about how these operators act outside of the support of these local states. A convenient solution is to disregard the additional, unused dimensions by truncating the local Hilbert spaces until the reduced states are supported on the entire local space. Such a procedure ensures that the reduced states $\rho_{A}$ and $\rho_{B}$ are full-rank, which we assume throughout this work. If this realisation achieves $\avrh{ W }{} = \qval$, where $\ave{A, B} := \tr (A\hc B)$ is the Hilbert--Schmidt inner product, we deduce that
\begin{equation}
\label{eq:Ax-By-projective}
\ave{ A_{x}^{2}, \rho_{A} } = 1 \nbox{and} \ave{ B_{y}^{2}, \rho_{B} } = 1
\end{equation}
for $x, y \in \{0, 1, 2\}$ and moreover that
\begin{equation}
\label{eq:Vj-condition}
\avrh{ L_{j}^{2} }{} = 0
\end{equation}
for $j \in \{0, 1, 2\}$. Since the reduced states are full-rank Eq.~\eqref{eq:Ax-By-projective} implies that all the measurements are projective, i.e.~$A_{x}^{2} = \I$ and $B_{y}^{2} = \I$. Conditions given in Eq.~\eqref{eq:Vj-condition}, on the other hand, impose some constraints on how the observables of Alice and Bob act on the state. Since $\avrh{ L_{j}^{2} }{} = \fnrhh{ L_{j} }^{2}$, where $\norm{\cdot}_{F}$ is the Frobenius norm $\norm{A}_{F} := \sqrt{ \ave{ A, A } }$, the equality $\avrh{ L_{j}^{2} }{} = 0$ implies that the operator $L_{j} \rho_{AB}^{1/2}$ vanishes. As an immediate consequence we obtain $L_{j} \rho_{AB} = 0$ for all $j \in \{0, 1, 2\}$. These conditions involve the observables of both parties, but we can use projectivity deduced earlier to eliminate one of them. By elementary algebraic manipulations (see Appendix~\ref{app:exact-self-testing} for details) we show that $L_{1} \rho_{AB} = 0$ implies
\begin{equation}
\label{eq:A0-A1-anticommutator}
\{ A_{0}, A_{1} \} = ( 2 - \alpha^{2} ) \I.
\end{equation}
Furthermore, equality $(L_{0} + L_{2}) \rho_{AB} = 0$ implies
\begin{equation}
\label{eq:A0+A1-A2-anticommutator}
\{ A_{0} + A_{1}, A_{2} \} = 0.
\end{equation}
It is well-known that the commutation relation given in Eq.~\eqref{eq:A0-A1-anticommutator} combined with $A_{0}^{2} = A_{1}^{2} = \I$ implies a particular form of $A_{0}$ and $A_{1}$ (see Ref.~\cite{kaniewski17a} for an elementary proof). The Hilbert space of Alice must be of the form $\sH_{A} \equiv \amsbb{C}^{2} \otimes \amsbb{C}^{d_{A}}$ for some $d_{A} \in \amsbb{N}$ and up to a local unitary the observables can be written as
\begin{align*}
A_{0} &= ( \cos \theta_{\alpha} \X + \sin \theta_{\alpha} \Z ) \otimes \I,\\
A_{1} &= ( \cos \theta_{\alpha} \X - \sin \theta_{\alpha} \Z ) \otimes \I.
\end{align*}
Then, Eq.~\eqref{eq:A0+A1-A2-anticommutator} implies that
\begin{equation*}
A_{2} = \sum_{j = 1}^{d_{A}} \big( \cos u_{j} \Y + \sin u_{j} \Z \big) \otimes \ketbraq{ a_{j} },
\end{equation*}
where $u_{j} \in [0, 2 \pi)$ and $\{ \ket{a_{j}} \}_{j = 1}^{d_{A}}$ forms an orthonormal basis on $\amsbb{C}^{d_{A}}$. It is convenient to think of this arrangement of observables as a direct sum of $2 \times 2$ subspaces where each subspace is characterised by an angle $u_{j} \in [0, 2\pi)$.

Since the Bell functional is symmetric with respect to swapping Alice and Bob, the Hilbert space of Bob must also decompose as $\sH_{B} \equiv \amsbb{C}^{2} \otimes \amsbb{C}^{d_{B}}$ for some $d_{B} \in \amsbb{N}$ and the observables must be of the same form. However, it is convenient to write them down in a slightly different manner:
\begin{align*}
B_{0} &= \sum_{k = 1}^{d_{B}} \big[ \cos \theta_{\alpha} \X + \sin \theta_{\alpha} ( - \cos v_{k} \Y + \sin v_{k} \Z ) \big] \otimes \ketbraq{b_{k}},\\
B_{1} &= \sum_{k = 1}^{d_{B}} \big[ \cos \theta_{\alpha} \X - \sin \theta_{\alpha} ( - \cos v_{k} \Y + \sin v_{k} \Z ) \big] \otimes \ketbraq{b_{k}},\\
B_{2} &= \Z \otimes \I,
\end{align*}
where $v_{k} \in [0, 2 \pi)$ and $\{ \ket{b_{k}} \}_{k = 1}^{d_{B}}$ forms an orthonormal basis on $\amsbb{C}^{d_{B}}$.

Having characterised the local observables we are ready to write down the Bell operator. It is convenient to reorder the registers and write it as
\begin{equation*}
W = \sum_{j = 1}^{ d_{A} } \sum_{k = 1}^{ d_{B} } R( u_{j}, v_{k} ) \otimes \ketbraq{ a_{j} } \otimes \ketbraq{ b_{k} },
\end{equation*}
where $R(u, v)$ is the two-qubit Bell operator corresponding to angle $u$ for Alice and $v$ for Bob. To characterise the states which give rise to the maximal violation we must find out for which choices of $u$ and $v$ the value $\lambda = \qval$ is an eigenvalue of $W$ and what the corresponding eigenspace is. Since $\ketbraq{a_{j}} \otimes \ketbraq{b_{k}}$ are orthogonal projectors, the spectrum of $W$ is simply the union of the spectra of $R(u_{j}, v_{k})$. The two-qubit operator $R(u, v)$ can be diagonalised explicitly and the only eigenvalue that can attain the maximal value of $\qval$ is given by
\begin{equation*}
\lambda_{\textnormal{max}}( R(u, v) ) = 4 + \alpha^{2} \Big[ 2 \cos \Big( \frac{ u - v }{2} \Big) - 1 \Big],
\end{equation*}
where we have eliminated $\theta_{\alpha}$ using the relation $\alpha = 2 \sin \theta_{\alpha}$. The eigenvalue $\lambda = \qval$ appears iff $u = v$, the corresponding eigenspace is 1-dimensional and one can check that thanks to the particular choice of Bob's observables the corresponding eigenvector is always $\ket{\Phi^{+}}$. The fact that in this continuous family of two-qubit realisations parametrised by angle $u \in [0, 2\pi)$ the optimal state does not depend on the angle allows us to conclude that any state $\rho_{AB}$ satisfying $\avrh{W}{} = \qval$ must up to local unitaries be of the form:
\begin{equation*}
\rho_{AB} = \Phi^{+}_{A'B'} \otimes \sigma_{A''B''},
\end{equation*}
where $\sigma_{A''B''}$ is a normalised state satisfying
\begin{equation*}
\ave[\Big]{ \sigma_{A''B''}, \ketbraq{ a_{j} } \otimes \ketbraq{ b_{k} } } = 0
\end{equation*}
whenever $u_{j} \neq v_{k}$ (the state $\sigma_{A''B''}$ is only supported on the subspaces where Alice and Bob perform ``matching'' measurements). In other words, every quantum realisation that achieves the maximal quantum value is basically a convex combination of the two-qubit realisations presented above.

We are now able to characterise all the probability distributions which saturate the quantum value and it suffices to compute the statistics corresponding to the two-qubit realisations. Since the state is maximally entangled, we have $\ave{ A_{x} } = \ave{ B_{y} } = 0$, while the correlators are given by
\begin{align*}
\ave{A_{0} B_{0}} &= \ave{A_{1} B_{1}} = 1 - \frac{\alpha^{2}}{4} ( 1 - \sin u ),\\
\ave{A_{0} B_{1}} &= \ave{A_{1} B_{0}} = 1 - \frac{\alpha^{2}}{4} ( 1 + \sin u ),\\
\ave{A_{0} B_{2}} &= \ave{A_{2} B_{0}} = \frac{\alpha}{2},\\
\ave{A_{1} B_{2}} &= \ave{A_{2} B_{1}} = - \frac{\alpha}{2},\\
\ave{A_{2} B_{2}} &= \sin u.
\end{align*}
It is clear that this set, which is an exposed face of the quantum set of correlations, is simply a line segment and that the extremal points correspond to $u = \pi/2$ and $u = 3 \pi/2$. Note also that choosing $u = x$ and $u = \pi - x$ leads to identical statistics, because the corresponding realisations are related by a transpose.

It is natural to ask whether a stronger self-testing statement can be made if instead of looking at the Bell value we consider the entire statistics. It is easy to see that the extremal points of the line segment are self-tests in the usual sense, i.e.~the exact form of observables can be deduced (in fact, since they are extremal points of the quantum set of correlators, this follows already from the work of Tsirelson~\cite{tsirelson87a}). The points in the interior, on the other hand, cannot be self-tests in the usual sense, since they are not extremal in the quantum set. Moreover, it is easy to see that each interior point can be achieved in at least two inequivalent ways: (1) by a particular two-qubit realisation corresponding to a specific value of $u$ or (2) as a convex combination of the two extremal points. Nevertheless, all such points certify the maximally entangled state of two qubits.

\prlparagraph{Robust self-testing}
To study the case of non-maximal violation we focus on the Bell functional which corresponds to $\alpha = \q$. This is a convenient choice because in this case all the ideal realisations employ a pair of anticommuting observables. Here we present a robust self-testing of the observables and the state (see Appendices~\ref{app:extraction-channels} and~\ref{app:robust-self-testing} for derivations). For the following two theorems we assume that $W$ is a Bell operator obtained by setting $\alpha = \q$ in Eq.~\eqref{eq:Bell-operator}.

\begin{figure}[h]
\includegraphics{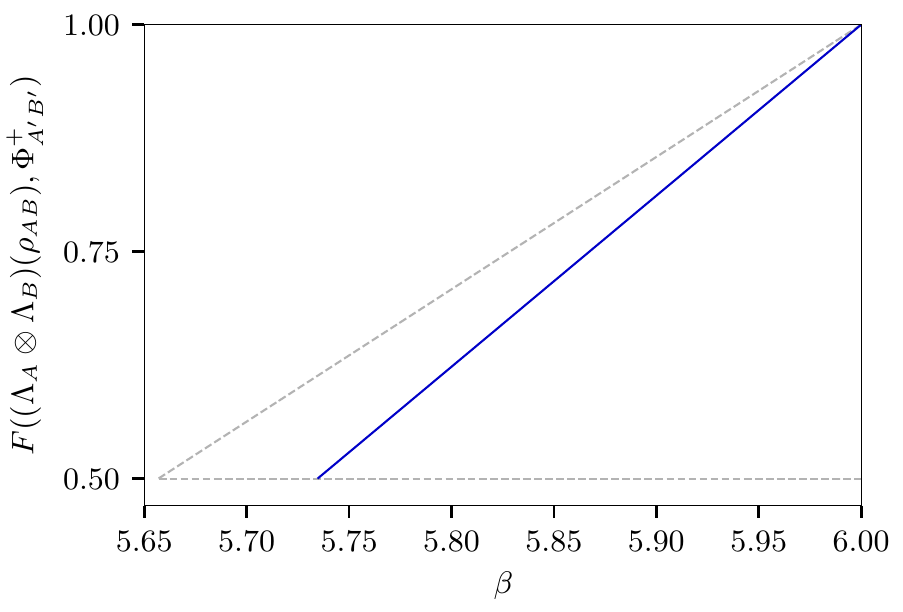}
\caption{The blue line represents a lower bound on the fidelity of the extracted two-qubit state computed using the swap method as a function of the observed violation. The gray lines correspond to the trivial upper and lower bounds.}
\label{fig:self-test}
\end{figure}

In the exact case we have concluded that $A_{x}^{2} = \I$, $\{A_{0}, A_{1}\} = 0$ and $\{ A_{0} + A_{1}, A_{2} \} = 0$. The sum-of-squares decomposition implies that these algebraic relations are approximately satisfied if a near-maximal violation is observed.
\begin{thm}
\label{thm:robust-self-testing-observables}
If $\ave{ W, \rho_{AB} } \geq 6 - \varepsilon$, then the measurements are nearly projective:
\begin{equation*}
\ave{ A_{x}^{2}, \rho_{A} }{} \geq 1 - \varepsilon
\end{equation*}
for $x \in \{0, 1, 2\}$. Moreover, the following pairs of operators approximately anticommute:
\begin{align*}
\ave{ \{ A_{0}, A_{1} \}^{2}, \rho_{A} } &\leq 4 ( 3 + 2 \q ) \varepsilon,\\
\ave{ \{A_{0} + A_{1}, A_{2} \}^{2}, \rho_{A} } &\leq 8 ( 9 + 4 \q ) \varepsilon.
\end{align*}
\end{thm}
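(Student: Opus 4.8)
\emph{Proof plan.} Everything comes from the sum-of-squares identity for $W$ displayed above, specialised to $\alpha=\q$, so that $\alpha^{2}=2$ and the quantum value is $6$. I would first pair that identity with $\rho_{AB}$ in the Hilbert--Schmidt inner product. Writing $\delta_{A,x}:=1-\ave{A_{x}^{2},\rho_{A}}$ and $\delta_{B,y}:=1-\ave{B_{y}^{2},\rho_{B}}$, which are non-negative because $0\le A_{x}^{2},B_{y}^{2}\le\I$, and using $\ave{L_{j}^{2},\rho_{AB}}=\fnrhh{L_{j}}^{2}$, the hypothesis $\ave{W,\rho_{AB}}\ge 6-\varepsilon$ collapses into the single ``error budget''
\begin{equation*}
2\sum_{x=0}^{2}\delta_{A,x}+2\sum_{y=0}^{2}\delta_{B,y}+\sum_{j=0}^{2}\fnrhh{L_{j}}^{2}\le 2\varepsilon .
\end{equation*}
Since every term on the left is non-negative this already gives $\delta_{A,x}\le\varepsilon$, i.e.\ the first assertion $\ave{A_{x}^{2},\rho_{A}}\ge 1-\varepsilon$, as well as the auxiliary facts $\delta_{B,y}\le\varepsilon$, $\fnrhh{L_{j}}\le\q\sqrt{\varepsilon}$, and $\fnrhh{L_{0}+L_{2}}\le 2\sqrt{\varepsilon}$ (the last by a Cauchy--Schwarz step on the $j=0,2$ terms).

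For the two anticommutator estimates the idea is to make the exact-case derivations quantitative. In the ideal situation one obtains $\{A_{0},A_{1}\}=(2-\alpha^{2})\I$ from $L_{1}\rho_{AB}=0$ and $\{A_{0}+A_{1},A_{2}\}=0$ from $(L_{0}+L_{2})\rho_{AB}=0$, using $A_{x}^{2}=B_{y}^{2}=\I$ along the way. I would reproduce that algebra verbatim, but applied to $\rho_{AB}^{1/2}$ and never invoking $A_{x}^{2}=\I$, $B_{y}^{2}=\I$ or $L_{j}\rho_{AB}=0$; instead I keep the ``defect'' operators $(\I-A_{x}^{2})\otimes\I$, $\I\otimes(\I-B_{y}^{2})$ and the $L_{j}$ as explicit remainders. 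Concretely, multiplying $L_{1}=(A_{0}-A_{1})\otimes\I-\alpha\,\I\otimes B_{2}$ by $(A_{0}-A_{1})\otimes\I$ on the left and by $\I\otimes B_{2}$ on the right (both commute with $L_{1}$) and using $(A_{0}-A_{1})^{2}=A_{0}^{2}+A_{1}^{2}-\{A_{0},A_{1}\}$ produces an identity of the shape
\begin{equation*}
\{A_{0},A_{1}\}\otimes\I=(2-\alpha^{2})\,\I\otimes\I+D_{1}-\big((A_{0}-A_{1})\otimes\I\big)L_{1}-\alpha\,L_{1}(\I\otimes B_{2}),
\end{equation*}
where $D_{1}$ is a fixed linear combination of the defect operators and, crucially, the constant term $(2-\alpha^{2})\I$ vanishes precisely because $\alpha=\q$. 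Repeating this with $L_{0}+L_{2}=(A_{0}+A_{1}+\alpha A_{2})\otimes\I-2\,\I\otimes B_{0}$ (multiply by $(A_{0}+A_{1}+\alpha A_{2})\otimes\I$ on the left and by $\I\otimes B_{0}$ on the right, then expand $(A_{0}+A_{1}+\alpha A_{2})^{2}$) yields an analogous identity for $\alpha\{A_{0}+A_{1},A_{2}\}\otimes\I$ whose constant term again cancels at $\alpha=\q$; note, however, that this second identity also contains the operator $\{A_{0},A_{1}\}\otimes\I$ that we have just brought under control.

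The last step is to apply these identities to $\rho_{AB}^{1/2}$ and estimate the Frobenius norm of each side. The defect contributions are controlled by the elementary operator inequality $(\I-M)^{2}\le\I-M$ valid for $0\le M\le\I$, which yields $\fnrhh{(\I-A_{x}^{2})\otimes\I}^{2}\le\ave{\I-A_{x}^{2},\rho_{A}}=\delta_{A,x}$ and similarly for the $B$-defects. The $L_{j}$-contributions are controlled by $\fnrhh{XL_{j}}\le\norm{X}_{\mathrm{op}}\fnrhh{L_{j}}$, after first commuting $\I\otimes B_{2}$ (respectively $\I\otimes B_{0}$) to the left of $L_{1}$ (respectively $L_{0}+L_{2}$) so that only the small quantity $\fnrhh{L_{j}}$ survives. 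Rather than applying the triangle inequality term by term, I would collect all the resulting square roots $\sqrt{\delta_{A,x}}$, $\sqrt{\delta_{B,y}}$, $\fnrhh{L_{j}}$ and bound their weighted sum by a single Cauchy--Schwarz step against the budget inequality above; this is what delivers the sharp bounds $\fnrhh{\{A_{0},A_{1}\}\otimes\I}\le 2(1+\q)\sqrt{\varepsilon}$ and then $\fnrhh{\{A_{0}+A_{1},A_{2}\}\otimes\I}\le 2(4+\q)\sqrt{\varepsilon}$. Since $\ave{\{A_{0},A_{1}\}^{2},\rho_{A}}=\fnrhh{\{A_{0},A_{1}\}\otimes\I}^{2}$ and likewise for the other pair, squaring these two inequalities produces exactly the bounds $4(3+2\q)\varepsilon$ and $8(9+4\q)\varepsilon$ in the statement.

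The genuinely delicate point, and the step I expect to be the main obstacle, is the constant bookkeeping in this last stage: a crude term-by-term triangle inequality squanders the $2\varepsilon$ budget and produces constants that are too large by a fixed factor, so one has to treat the budget inequality as a whole (via the Cauchy--Schwarz grouping above, or via an auxiliary sum-of-squares certificate for $(\{A_{0},A_{1}\}\otimes\I)^{2}$) and take some care not to pay the trivial operator norms $\norm{A_{0}-A_{1}}_{\mathrm{op}}\le 2$ and $\norm{A_{0}+A_{1}+\alpha A_{2}}_{\mathrm{op}}\le 2+\q$ wherever a sharper route exists. A secondary point is that the order matters: the $\{A_{0}+A_{1},A_{2}\}$ estimate is second order --- it inherits the error of the $\{A_{0},A_{1}\}$ estimate --- so the two bounds must be proved in sequence rather than in parallel.
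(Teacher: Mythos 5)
Your overall route is the same as the paper's: read off $\ave{A_x^2,\rho_A}\ge 1-\varepsilon$ and $\sum_j \fnrhh{L_j}^2\le 2\varepsilon$ directly from the sum-of-squares decomposition at $\alpha=\q$, turn the exact-case algebra for $L_1$ and $L_0+L_2$ into approximate statements by multiplying by bounded operators, and square the resulting Frobenius-norm bounds (the theorem's constants are precisely $[2(1+\q)]^2=4(3+2\q)$ and $[2(4+\q)]^2=8(9+4\q)$, as you noted). The paper does this by chaining reverse triangle inequalities (Lemmas~\ref{lem:A0-A1-commute} and~\ref{lem:A0+A1-A2-commute}, the second building on the first, exactly the ``sequential'' structure you describe), and it freely uses $A_x^2=B_y^2=\I$, i.e.\ it works under the projectivity reduction announced at the start of the appendices. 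Your repackaging as a single exact operator identity, and the sharper $\fnrhh{L_0+L_2}\le 2\sqrt{\varepsilon}$, are correct and harmless.

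The genuine gap is the constant claim in your defect-carrying version of the first anticommutator bound. With the defects retained, your identity reads $\{A_0,A_1\}\otimes\I = D_1-((A_0-A_1)\otimes\I)L_1-\q\,L_1(\I\otimes B_2)$ with $D_1=-[(\I-A_0^2)+(\I-A_1^2)]\otimes\I+2\,\I\otimes(\I-B_2^2)$, so the best bound available from your own inequalities is the maximum of $\sqrt{\delta_{A,0}}+\sqrt{\delta_{A,1}}+2\sqrt{\delta_{B,2}}+(2+\q)\fnrhh{L_1}$ subject to $2\delta_{A,0}+2\delta_{A,1}+2\delta_{B,2}+\fnrhh{L_1}^2\le 2\varepsilon$; the Cauchy--Schwarz optimum of that relaxation is $\sqrt{(18+8\q)\varepsilon}\approx 5.41\sqrt{\varepsilon}$, strictly larger than the claimed $2(1+\q)\sqrt{\varepsilon}\approx 4.83\sqrt{\varepsilon}$, so squaring gives $(18+8\q)\varepsilon$ rather than the theorem's $(12+8\q)\varepsilon$, and no regrouping of those same ingredients can do better since the Cauchy--Schwarz step is exactly tight for that optimisation. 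The fix is simply to adopt the paper's convention and take the measurements projective (Naimark dilation as step zero), whereupon $D_1=0$ and your identity yields exactly $(2+\q)\fnrhh{L_1}\le(2+\q)\sqrt{2\varepsilon}=2(1+\q)\sqrt{\varepsilon}$; for the second anticommutator note that the purely sequential version using your $\fnrhh{L_0+L_2}\le2\sqrt{\varepsilon}$ gives $(4+5\q)\sqrt{\varepsilon}$, which narrowly misses $2(4+\q)\sqrt{\varepsilon}$, whereas the joint Cauchy--Schwarz over $\fnrhh{L_0},\fnrhh{L_1},\fnrhh{L_2}$ gives about $8.4\sqrt{\varepsilon}$ and comfortably beats it. So: keep your structure, but either invoke projectivity explicitly or state the correspondingly weaker constants; as written, the assertion that the bookkeeping ``delivers the sharp bounds'' fails for $\{A_0,A_1\}$.
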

By symmetry analogous statements hold for the observables of Bob. Note that these statements remain non-trivial even under a macroscopic amount of noise (e.g.~the trivial bound for the second quantity reads $\ave{ \{ A_{0}, A_{1} \}^{2}, \rho_{A} } \leq 4$, which is saturated by all projective measurements whose operators commute).

A complete characterisation of the optimal arrangements derived above allows us to propose suitable extraction channels and what is novel is the fact that one of the extraction channels must depend on all three observables.
\begin{thm}
\label{thm:robust-self-testing-state}
If $\ave{ W, \rho_{AB} } \geq 6 - \varepsilon$, then there exist local extraction channels $\Lambda_{A}$ and $\Lambda_{B}$ such that
\begin{equation*}
F( ( \Lambda_{A} \otimes \Lambda_{B} )( \rho_{AB} ) , \Phi^{+}_{A'B'} ) \geq 1 - \frac{1}{4} \big( 18 + 11\q \big) \sqrt{\varepsilon}.
\end{equation*}
\end{thm}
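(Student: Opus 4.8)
The plan is to build the two local extraction channels explicitly from the characterisation of the ideal realisations and then bound the output fidelity using the operator-norm machinery that the sum-of-squares decomposition provides. For Bob's side, whose ideal observables $(B_0, B_1, B_2)$ already contain a genuine pair of anticommuting operators when $\alpha = \q$ (since $\{B_0, B_1\} = (2 - \alpha^2)\I = 0$ in that case), the natural choice is a ``dephasing-type'' channel built out of the dichotomic observable $E_{\Z}$-data: concretely a channel of the form $\Lambda_B(\rho) = \frac14\ave{\I,\rho}\I + \frac14\ave{E_{\X},\rho}\X + \frac14\ave{E_{\Y},\rho}\Y + \frac14\ave{E_{\Z},\rho}\Z$ with $E_{\X}, E_{\Y}, E_{\Z}$ the operators displayed in the macro \texttt{Eoperators} of the preamble, i.e.
\begin{align*}
E_{\X} &= \frac{1}{4\q}\big[ 3(R+S) - (SRS + RSR) \big],\\
E_{\Y} &= \frac{-i}{2}[R,S],\\
E_{\Z} &= \frac{1}{4\q}\big[ 3(R-S) - (SRS - RSR) \big],
\end{align*}
where $R := B_0 + B_1$ (rescaled) and $S := \alpha B_2$ so that in the ideal case $R, S$ are anticommuting $\pm1$ observables. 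The novelty the paper flags — that one channel must use all three observables — is exactly the appearance of $SRS$ and $RSR$ terms above, which are needed because $B_0$ and $B_1$ individually do not anticommute with $B_2$ but their sum does (Eq.~\eqref{eq:A0+A1-A2-anticommutator}). Alice's channel $\Lambda_A$ is constructed the same way from $(A_0, A_1, A_2)$.

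Next I would verify two things about these channels. First, that they are genuinely completely positive and trace-preserving when the input observables are exact $\pm1$ observables satisfying the ideal relations; this is a finite check (e.g.~writing $\Lambda$ as a mixture of unitary conjugations, as is standard for such ``localising'' channels). Second, and this is where the near-maximal violation enters, I would show that the coefficients one actually feeds in — $\ave{E_{\X}, \rho_{AB}}$ etc., with the real, only-approximately-projective, only-approximately-anticommuting observables — are close to what the ideal channel would produce. Here Theorem~\ref{thm:robust-self-testing-observables} does the heavy lifting: it gives $\ave{A_x^2,\rho_A}\ge 1-\varepsilon$ and the two anticommutator bounds $\ave{\{A_0,A_1\}^2,\rho_A}\le 4(3+2\q)\varepsilon$ and $\ave{\{A_0+A_1,A_2\}^2,\rho_A}\le 8(9+4\q)\varepsilon$, and by symmetry the same on Bob's side. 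Each deviation of $E_{\X}, E_{\Y}, E_{\Z}$ from an exact $\pm1$ observable is a polynomial in these small quantities; a Cauchy--Schwarz / triangle-inequality bookkeeping in the norm $\fnrhh{\cdot}$ turns each $\varepsilon$-bound into an $O(\sqrt\varepsilon)$ bound on $\fnrhh{(\text{correction term})}$.

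Then I would compute $F((\Lambda_A\otimes\Lambda_B)(\rho_{AB}), \Phi^+_{A'B'})$. The clean way is to use $F(\tau, \Phi^+) = \bramatketq{\Phi^+}{\tau}$ and expand the left-hand side as a sum of terms $\tfrac{1}{16}\ave{P\otimes Q, \rho_{AB}}\cdot\bramatketq{\Phi^+}{P_{A'}\otimes Q_{B'}}$ over $P, Q\in\{\I,\X,\Y,\Z\}$; only the four ``diagonal'' terms $(\I,\I), (\X,\X), (\Y,\Y)$ (with the appropriate sign), $(\Z,\Z)$ survive with value $+1$ on $\Phi^+$, giving a main contribution $\tfrac14(\ave{\I\otimes\I} + \ave{E_{\X}\otimes E_{\X}} - \ave{E_{\Y}\otimes E_{\Y}} + \ave{E_{\Z}\otimes E_{\Z}})$ (signs fixed by the transpose convention on $\Phi^+$). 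In the ideal case this equals $1$; in general, replacing each $E_\bullet$ by an exact observable costs an $O(\sqrt\varepsilon)$ error controlled by the previous paragraph, and the off-diagonal $(P,Q)$ terms are likewise $O(\sqrt\varepsilon)$ since they vanish ideally. Collecting the constants — which is the tedious but routine part — should yield the stated $1 - \tfrac14(18 + 11\q)\sqrt\varepsilon$.

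I expect the main obstacle to be the constant-chasing: one must be careful that the extraction channels are CPTP \emph{for the actual, non-ideal observables} (not just the ideal ones), or else replace them by a nearby valid channel and absorb that replacement into the error — and then track how the constants $3+2\q$ and $9+4\q$ from Theorem~\ref{thm:robust-self-testing-observables} propagate through the expansion of $E_{\X}, E_{\Y}, E_{\Z}$, which involve degree-three monomials like $SRS$. A secondary subtlety is making sure that Bob's and Alice's single-sided deviation bounds can be combined on the bipartite state without a loss worse than a factor of $2$; this is handled by applying the triangle inequality in $\fnrhh{\cdot}$ to $(\Lambda_A\otimes\Lambda_B) - (\Lambda_A^{\mathrm{ideal}}\otimes\Lambda_B) - (\Lambda_A\otimes\Lambda_B^{\mathrm{ideal}}) + (\Lambda_A^{\mathrm{ideal}}\otimes\Lambda_B^{\mathrm{ideal}})$ and noting the cross term is higher order.
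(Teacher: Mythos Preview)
Your high-level strategy---build explicit extraction channels and bound the output fidelity via the sum-of-squares error terms---matches the paper, but the actual construction of the channels is not what you propose, and getting that right is most of the work.

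First, the two sides are treated \emph{asymmetrically}. Bob's channel is indeed the $\Lambda_B$ from the \texttt{Eoperators} macro, but with $R = B_0$ and $S = B_1$ (the raw observables), not with $R = (B_0+B_1)/\q$ and $S = \alpha B_2$. The complete-positivity proof (Lemma~\ref{lem:LambdaB-positivity}) requires $R^2 = S^2 = \I$ exactly, which the paper secures by assuming projectivity without loss of generality; your combinations $(B_0+B_1)/\q$ and $\q B_2$ do not satisfy this outside the ideal case, so your $\Lambda_B$ would not be a channel on the actual observables. Alice, on the other hand, uses a \emph{different} construction: the (generalised) swap isometry $\Lambda_A$ with $R = A_2$ and $S = \tfrac{1}{4\q}\big[3(A_0+A_1) - (A_1A_0A_1 + A_0A_1A_0)\big]$. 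This is where ``one channel depends on all three observables'' actually enters---through a degree-three polynomial in $A_0,A_1$ fed as the second input to the swap circuit---not through the $SRS$, $RSR$ terms of $\Lambda_B$ as you suggest. The paper also has to extend Construction~A to inputs $S$ with $S^2 \leq \I$ (rather than $=\I$), which it does via modified Kraus operators; this is how the CPTP-for-non-ideal-observables issue you flag is resolved.

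Second, the fidelity is not bounded by controlling all four Pauli-diagonal terms. The paper uses the universal two-qubit constraint $\ave{\tau,\X\otimes\X}+\ave{\tau,\Y\otimes\Y}+\ave{\tau,\Z\otimes\Z}\leq 1$ to obtain $F \geq \tfrac12(\ave{C_{\X},\rho_{AB}} + \ave{C_{\Z},\rho_{AB}})$, so only two terms need to be bounded (Lemmas~\ref{lem:Cx-bound} and~\ref{lem:Cz-bound}). These are attacked directly---expanding $C_{\X}$ and $C_{\Z}$ in monomials and bounding each piece by the anticommutator estimates---rather than by comparing to an ``ideal channel'' and absorbing the discrepancy. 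Your telescoping argument $(\Lambda_A\otimes\Lambda_B) - (\Lambda_A^{\mathrm{ideal}}\otimes\Lambda_B^{\mathrm{ideal}})$ is not used and would not obviously reproduce the stated constant $\tfrac14(18+11\q)$.
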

This statement is not particularly robust to noise: the right-hand side exceeds the trivial value of $\frac{1}{2}$ only if $\varepsilon \leq 0.0035$. To obtain stronger results we have employed the swap method and the results are presented in Fig.~\ref{fig:self-test}. The lower bound on the fidelity of the extracted state is essentially a straight line and strongly resembles the best currently known bound for the CHSH inequality (cf.~Fig.~1 in Ref.~\cite{kaniewski16b}).

\prlparagraph{Certifying randomness}
\begin{figure}[h]
\includegraphics{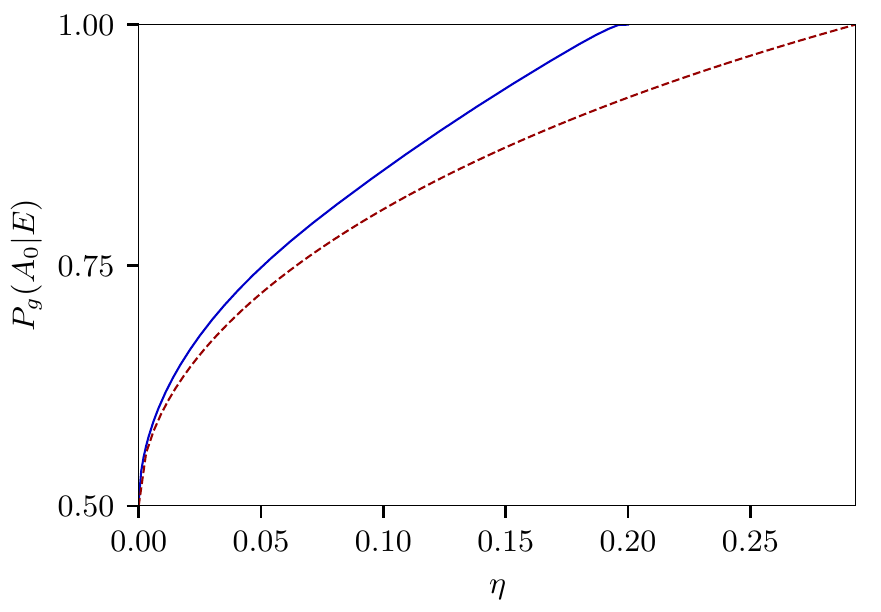}
\caption{Comparison of the randomness certification power of the new inequality (solid blue line) and the CHSH inequality (dashed red line). We plot upper bounds on the probability that Eve successfully guesses the outcome of the $A_{0}$ measurement as a function of the noise parameter $\eta$.}
\label{fig:chsh-comparison}
\end{figure}
We have so far focused solely on certifying quantum properties such as anticommutation of observables or the presence of a particular quantum state. The next natural question concerns the randomness that can be certified from the Bell violation against an external eavesdropper Eve. If we observe the maximal violation, we can draw conclusions from the complete characterisation of optimal quantum realisations derived above. Contrary to the usual scenario we can no longer argue that Eve is completely decoupled from the degrees of freedom on which the measurements of Alice and Bob act non-trivially. Nevertheless, if we only care about the randomness produced by a single observable of a single party, we can still guarantee maximal randomness, because all the optimal realisations involve rank-1 projective measurements acting on a qubit which is maximally entangled with the trusted party. To examine what happens in the presence of noise we have performed numerical calculations for the case $\alpha = 1$. Perhaps surprisingly, we have found that the qualitative behaviour resembles closely that of standard rigid inequalities. To make a fair comparison with the CHSH inequality suppose that in both cases the ideal measurements are performed on the isotropic two-qubit state $\sigma(\eta) := (1 - \eta) \Phi^{+} + \eta \I \otimes \I / 4$, where $\eta \in [0, 1]$ is a noise parameter. In Fig.~\ref{fig:chsh-comparison} we compare our numerical results with the well-known analytic trade-off for the CHSH inequality~\cite{pironio10a} (see Appendix~\ref{app:randomness} for details).

\prlparagraph{Conclusions and outlook}
Self-testing is an active research field and a particularly interesting direction is to explore its powers and limitations by deriving new types of self-testing statements or impossibility results. For instance we have recently learnt that one can self-test quantum channels~\cite{sekatski18a}, entangled measurements~\cite{bancal18a, renou18a}, quantum instruments~\cite{wagner20a} or that one can extend the concept of self-testing to prepare-and-measure scenarios~\cite{tavakoli18a, farkas19a, tavakoli20a, mironowicz19a, miklin20a, mohan19a}. In this work we derive a new type of self-testing statement which allows us to certify the state but not the measurements.

Until now self-testing of the state or randomness certification have only been shown for rigid Bell inequalities and so one might have conjectured rigidity to be necessary for these purposes. In this work we show that the non-rigid nature of a Bell inequality does not prevent it from being a robust self-test of a quantum state or an efficient certificate for randomness.

The first question that follows from our work is whether there exist applications in which rigidity is actually strictly necessary. Can we find a natural and operational task in which non-rigid inequalities exhibit a qualitatively different behaviour? A different direction would be to look for even weaker forms of self-testing. The Bell inequalities considered in this work do not certify the entire quantum realisation, but at least uniquely determine the state. We are not aware of any bipartite Bell inequalities which are maximally violated by multiple inequivalent states, but if they exist, could they be used to make some even weaker form of self-testing statements? More generally, can we think of other natural generalisations of the concept of self-testing and device-independent certification?

\textsl{Note added.} Recently we became aware of Ref.~\cite{jebarathinam19a}, which investigates how self-testing and the geometry of the quantum set are affected by liftings.

\section*{Acknowledgments}
The project ``Robust certification of quantum devices'' is carried out within the HOMING programme of the Foundation for Polish Science co-financed by the European Union under the European Regional Development Fund.
\onecolumngrid
\appendix
\section{Exact self-testing}
\label{app:exact-self-testing}
In the main text we have shown how to completely characterise arrangements of local observables that are capable of achieving the quantum value, but we have skipped some elementary steps. In this appendix we fill in the details of this argument.

Writing out $L_{1} \rho_{AB} = 0$ gives
\begin{equation*}
\big[ ( A_{0} - A_{1} ) \otimes \I \big] \rho_{AB} = \alpha ( \I \otimes B_{2} ) \rho_{AB}.
\end{equation*}
The fact that all the measurements are projective implies that
\begin{align*}
\alpha^{2} \rho_{AB} = \alpha^{2} ( \I \otimes B_{2}^{2} ) \rho_{AB} &= \alpha \big[ ( A_{0} - A_{1} ) \otimes B_{2} \big] \rho_{AB}\\
&= \big[ ( A_{0} - A_{1} )^{2} \otimes \I \big] \rho_{AB}.
\end{align*}
Tracing out the register of Bob gives
\begin{equation*}
\alpha^{2} \rho_{A} = ( A_{0} - A_{1} )^{2} \rho_{A}.
\end{equation*}
Since the reduced state $\rho_{A}$ is full-rank, we can right-multiply by $\rho_{A}^{-1}$ to obtain
\begin{equation*}
\alpha^{2} \I = ( A_{0} - A_{1} )^{2},
\end{equation*}
which can be rearranged to give
\begin{equation*}
\{ A_{0}, A_{1} \} = ( 2 - \alpha^{2} ) \I.
\end{equation*}
Similarly, writing out $(L_{0} + L_{2}) \rho_{AB} = 0$ gives
\begin{equation*}
\big[ ( A_{0} + A_{1} + \alpha A_{2} ) \otimes \I \big] \rho_{AB} = 2 ( \I \otimes B_{0} ) \rho_{AB},
\end{equation*}
which through an analogous argument leads to
\begin{equation*}
( A_{0} + A_{1} + \alpha A_{2} )^{2} = 4 \, \I.
\end{equation*}
Combining this with the relation derived above gives
\begin{equation*}
\{ A_{0} + A_{1}, A_{2} \} = 0.
\end{equation*}
We now choose the basis such that $A_{0}$ and $A_{1}$ are given by
\begin{align*}
A_{0} &= ( \cos \theta_{\alpha} \X + \sin \theta_{\alpha} \Z ) \otimes \I,\\
A_{1} &= ( \cos \theta_{\alpha} \X - \sin \theta_{\alpha} \Z ) \otimes \I.
\end{align*}
To find all valid solutions for $A_{2}$ we start by writing $A_{2}$ as
\begin{equation*}
A_{2} = \I \otimes T_{\I} + \X \otimes T_{\X} + \Y \otimes T_{\Y} + \Z \otimes T_{\Z}
\end{equation*}
for some Hermitian operators $T_{\I}, T_{\X}, T_{\Y}, T_{\Z}$ acting on $\amsbb{C}^{d_{A}}$. Equality $\{ A_{0} + A_{1}, A_{2} \} = 0$ immediately implies that $T_{\I} = T_{\X} = 0$. It is then easy to check that
\begin{equation*}
A_{2}^{2} = \I \otimes ( T_{\Y}^{2} + T_{\Z}^{2} ) + i \X \otimes [ T_{\Y}, T_{\Z} ].
\end{equation*}
The condition $A_{2}^{2} = \I$ implies that
\begin{equation*}
T_{\Y}^{2} + T_{\Z}^{2} = \I \nbox{and} [T_{\Y}, T_{\Z}] = 0.
\end{equation*}
Since $T_{\Y}$ and $T_{\Z}$ commute, there exists a basis in which they are both diagonal and let us denote such a basis by $\{ \ket{ a_{j} } \}_{j = 1}^{d_{A}}$. The first condition implies that the eigenvalues of $T_{\Y}$ and $T_{\Z}$ can be expressed as $\cos u_{j}$ and $\sin u_{j}$ of some angle $u_{j} \in [0, 2 \pi)$ and therefore
\begin{align*}
T_{\Y} &= \sum_{j = 1}^{d_{A}} \cos u_{j} \, \ketbraq{ e_{j} },\\
T_{\Z} &= \sum_{j = 1}^{d_{A}} \sin u_{j} \, \ketbraq{ e_{j} }
\end{align*}
This immediately implies that
\begin{equation*}
A_{2} = \sum_{j = 1}^{d_{A}} \big( \cos u_{j} \, \Y + \sin u_{j} \, \Z \big) \otimes \ketbraq{ e_{j} },
\end{equation*}
which is precisely the form given in the main text.
\section{Extraction channels}
\label{app:extraction-channels}
In this appendix we propose two explicit constructions of extraction channels tailored to the case of $\alpha = \q$. The first one is an extension of the standard swap isometry given in Ref.~\cite{mckague12a}, while the second one is a novel construction. The reason why new constructions are necessary is the fact that at least one of the extraction channels must depend on all three observables. At this point we are only interested in certifying the state, so we can without loss of generality assume that the measurements of Alice and Bob are projective (mapping non-projective measurements onto projective ones can always be seen as the first part of the extraction process).
\subsection{Preliminaries}
Let us start by proving two simple facts about binary observables. Both proofs rely crucially on Jordan's lemma, which states that two Hermitian operators satisfying $R^{2} = S^{2} = \I$ can be simultaneously block-diagonalised such that the resulting blocks are of size at most $2 \times 2$.
\begin{lem}
\label{lem:Fx-norm}
Let $R, S$ be Hermitian operators acting on $\amsbb{C}^{d}$ satisfying $R^{2} = S^{2} = \I$. Then, the operator
\begin{equation*}
T := \frac{1}{ 4 \q } \big[ 3 ( R + S ) - ( S R S + R S R ) \big]
\end{equation*}
satisfies $- \I \leq T \leq \I$.
\end{lem}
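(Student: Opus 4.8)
The plan is to invoke Jordan's lemma to reduce the claim to a $2 \times 2$ computation. Since $R^{2} = S^{2} = \I$, there is an orthogonal decomposition $\amsbb{C}^{d} = \bigoplus_{k} V_{k}$ into $R$- and $S$-invariant subspaces of dimension at most two, and $T$ is block-diagonal with respect to this decomposition. On a one-dimensional block $R$ and $S$ act as $\pm 1$ and one checks directly that $T$ acts as $0$ or $\pm 1$, so $\lVert T \rVert \leq 1$ there. On a two-dimensional block, after a suitable choice of basis we may write $R = \cos\phi\, \X + \sin\phi\, \Z$ and $S = \cos\psi\, \X + \sin\psi\, \Z$ (both traceless involutions on $\amsbb{C}^{2}$; the degenerate sub-cases where $R = \pm S$ or a block is actually $1 \oplus 1$ are handled separately and are trivial). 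The whole problem thus collapses to: for such $2 \times 2$ matrices, show that $T = \frac{1}{4\q}[3(R+S) - (SRS + RSR)]$ has operator norm at most $1$.

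For the two-qubit step I would compute $SRS$ and $RSR$ explicitly. Using $R^{2} = S^{2} = \I$ one has $SRS = 2\ave{R,S}_{\text{op}} S - RS S \cdots$; more concretely, writing $\vec{r} = (\cos\phi, 0, \sin\phi)$ and $\vec{s} = (\cos\psi, 0, \sin\psi)$ as Bloch vectors with $\vec{r}\cdot\vec{s} = \cos(\phi-\psi)$, the identity $SRS = 2(\vec{r}\cdot\vec{s}) S - R$ (and symmetrically $RSR = 2(\vec{r}\cdot\vec{s}) R - S$) follows from the Pauli algebra. Substituting gives
\begin{equation*}
SRS + RSR = (2\cos(\phi-\psi) - 1)(R + S),
\end{equation*}
hence
\begin{equation*}
T = \frac{1}{4\q}\big[3 - 2\cos(\phi-\psi) + 1\big](R+S) = \frac{4 - 2\cos(\phi-\psi)}{4\q}(R+S).
\end{equation*}
Now $R + S$ has Bloch vector $\vec{r} + \vec{s}$ of length $\sqrt{2 + 2\cos(\phi-\psi)}$, so $\lVert R+S \rVert = \sqrt{2 + 2\cos(\phi-\psi)} = 2\lvert\cos\tfrac{\phi-\psi}{2}\rvert$. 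Therefore $\lVert T \rVert = \frac{(4 - 2\cos(\phi-\psi)) \cdot 2\lvert\cos\frac{\phi-\psi}{2}\rvert}{4\q}$. Setting $c := \cos\frac{\phi-\psi}{2} \in [-1,1]$ so that $\cos(\phi-\psi) = 2c^{2} - 1$, this becomes $\lVert T \rVert = \frac{(5 - 4c^{2}) \lvert c \rvert \cdot 2}{4\q} = \frac{(5 - 4c^{2})\lvert c\rvert}{\q}$, and it remains to verify $(5 - 4c^{2})\lvert c \rvert \leq \q$ for all $\lvert c \rvert \leq 1$; a one-variable calculus exercise (the left side, as a function of $t = \lvert c\rvert \in [0,1]$, is $5t - 4t^{3}$, whose maximum on $[0,1]$ is at $t = \sqrt{5}/\sqrt{12}$... actually one checks the relevant stationary point and endpoints) pins down the bound.

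I expect the only real obstacle to be making sure the $2 \times 2$ reduction is airtight: one must confirm that $T$ genuinely respects the Jordan block structure (which is immediate, since $T$ is a non-commutative polynomial in $R$ and $S$, and each block is jointly invariant under both), and one must correctly enumerate the degenerate blocks ($1 \times 1$ blocks, blocks where $R = \pm S$, blocks where $R$ or $S$ is $\pm\I$) so that the generic parametrization $R = \cos\phi\,\X + \sin\phi\,\Z$, $S = \cos\psi\,\X + \sin\psi\,\Z$ loses no generality. The algebra itself — the identity $SRS + RSR = (2\cos(\phi-\psi)-1)(R+S)$ and the final one-variable inequality — is routine once set up this way. (An alternative, basis-free route: prove the operator identity $SRS + RSR = (R+S)(RS+SR)/\ldots$ directly from $R^2=S^2=\I$ using the block structure, but the explicit Bloch-vector computation above is cleaner.)
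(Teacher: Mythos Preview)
Your approach is exactly the paper's: reduce via Jordan's lemma to $2\times 2$ blocks and bound the resulting scalar function. The paper uses the symmetric parametrisation $R=\cos\theta\,\X+\sin\theta\,\Z$, $S=\cos\theta\,\X-\sin\theta\,\Z$ (always attainable by rotating so that $R+S$ points along $\X$), which slightly shortens the algebra, but your two-angle version is equivalent since everything depends only on $\phi-\psi$.

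However, your final substitution contains two arithmetic slips that, as written, prevent the bound from closing. First, $4-2\cos(\phi-\psi)=4-2(2c^{2}-1)=6-4c^{2}$, not $5-4c^{2}$. Second, $\dfrac{2}{4\q}=\dfrac{1}{2\q}$, not $\dfrac{1}{\q}$. With both corrections,
\[
\lVert T\rVert=\frac{(6-4c^{2})\,|c|}{2\q}=\frac{(3-2c^{2})\,|c|}{\q},
\]
and the function $g(t)=3t-2t^{3}$ on $[0,1]$ attains its maximum at $t=1/\q$ with $g(1/\q)=\q$, giving $\lVert T\rVert\leq 1$ as required. (With your erroneous expression $(5-4c^{2})|c|/\q$ the maximum exceeds $1$, which is presumably why the calculus step stalled.) In the paper's notation $c=\cos\theta$ and $(3-2c^{2})|c|/\q=(3\cos\theta-\cos 3\theta)/(2\q)$, so once the arithmetic is repaired the two arguments are literally the same computation.
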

\begin{proof}
Thanks to Jordan's lemma it suffices to consider observables acting on $\amsbb{C}^{2}$. Up to unitaries these can be parametrised as
\begin{equation}
\begin{aligned}
\label{eq:two-observables}
R &= \cos \theta \, \X + \sin \theta \, \Z,\\
S &= \cos \theta \, \X - \sin \theta \, \Z,
\end{aligned}
\end{equation}
for $\theta \in [0, \pi/2]$. For these operators a direct calculation shows that
\begin{equation*}
T = \frac{ 3 \cos \theta - \cos 3 \theta }{ 2 \q } \, \X.
\end{equation*}
Now it suffices to check that $\abs{ 3 \cos \theta - \cos 3 \theta } \leq 2 \q$ for all $\theta$.
\end{proof}
Let $\cL(\amsbb{C}^{d})$ be the set of linear operators acting on $\amsbb{C}^{d}$.
\begin{lem}
\label{lem:LambdaB-positivity}
Let $R, S$ be Hermitian operators acting on $\amsbb{C}^{d}$ satisfying $R^{2} = S^{2} = \I$. Then, the linear map $\Lambda_{B} : \cL( \amsbb{C}^{d} ) \to \cL ( \amsbb{C}^{2} )$ defined as
\LambdaB{}
where
\Eoperators
is completely positive.
\end{lem}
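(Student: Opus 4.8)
The plan is to verify complete positivity of $\Lambda_B$ by exhibiting it as a legitimate quantum channel (a measure-and-prepare channel), for which complete positivity is automatic. Concretely, I will show that $\Lambda_B$ has the form $\Lambda_B(\rho) = \sum_k \langle M_k, \rho\rangle\, \sigma_k$ for some POVM $\{M_k\}$ on $\amsbb{C}^d$ and some density operators $\sigma_k$ on $\amsbb{C}^2$; any such map is completely positive (and trace-preserving if $\sum_k M_k = \I$, which I will not actually need since only positivity is claimed). Since the output dimension is $2$, it is natural to guess a decomposition with qubit outputs of the form $\sigma_k = \tfrac12(\I + \vec n_k \cdot \vec\sigma)$, so that $\Lambda_B(\rho) = \tfrac12\langle \I,\rho\rangle\,\I + \tfrac12\sum_{P\in\{\X,\Y,\Z\}}\big(\sum_k (\vec n_k)_P \langle M_k,\rho\rangle\big) P$; matching with the given formula forces $E_{\X} = \sum_k (\vec n_k)_{\X} M_k$, and similarly for $E_{\Y}, E_{\Z}$, while the identity coefficient $\tfrac12$ already matches $\tfrac12\langle\I,\rho\rangle\I$.

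First I would invoke Jordan's lemma to reduce to the case $d \le 2$: since $E_{\X}, E_{\Y}, E_{\Z}$ are all built out of $R$ and $S$, they are block-diagonal in the common Jordan basis, and $\Lambda_B$ acts blockwise, so complete positivity of the whole map follows from complete positivity on each $2\times 2$ block (including the trivial $1\times1$ blocks, which one handles directly or absorbs by taking $\theta \in \{0, \pi/2\}$). On a $2\times2$ block I would use the parametrisation $R = \cos\theta\,\X + \sin\theta\,\Z$, $S = \cos\theta\,\X - \sin\theta\,\Z$ as in Eq.~\eqref{eq:two-observables}, and compute $E_{\X}, E_{\Y}, E_{\Z}$ explicitly. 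From Lemma~\ref{lem:Fx-norm} we already know $E_{\X} = \tfrac{3\cos\theta - \cos3\theta}{2\q}\X =: c_{\X}(\theta)\,\X$ with $|c_{\X}| \le 1$. A direct computation of the commutator gives $E_{\Y} = -\tfrac{i}{2}[R,S] = -\tfrac{i}{2}(-2\sin\theta\cos\theta\,[\X,\Z]) = \ldots = -\sin\theta\cos\theta\cdot(\text{something})\,\Y$, i.e. $E_{\Y} = c_{\Y}(\theta)\,\Y$ for an explicit $c_{\Y}$, and similarly $E_{\Z} = c_{\Z}(\theta)\,\Z$ — note the sign structure in the definition of $E_{\Z}$ (with $SRS - RSR$) is what kills the $\X$ component and leaves a pure $\Z$ term. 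So on each block $\Lambda_B(\rho) = \tfrac12\langle\I,\rho\rangle\I + \tfrac12\sum_P c_P(\theta)\langle P,\rho\rangle P$, which is (a rescaling of) a Pauli-diagonal channel on a qubit.

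The crux is then the standard fact that a qubit map $\rho \mapsto \tfrac12(\langle\I,\rho\rangle\I + \sum_P t_P \langle P,\rho\rangle P)$ — i.e.\ a Pauli channel with multipliers $(t_{\X}, t_{\Y}, t_{\Z})$ — is completely positive if and only if the four numbers $1 \pm t_{\X} \pm t_{\Y} \pm t_{\Z}$ (with an even number of minus signs, equivalently all four combinations $1+t_{\X}+t_{\Y}+t_{\Z}$, $1+t_{\X}-t_{\Y}-t_{\Z}$, $1-t_{\X}+t_{\Y}-t_{\Z}$, $1-t_{\X}-t_{\Y}+t_{\Z}$) are all nonnegative; these are precisely the eigenvalues of the Choi matrix. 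So the proof reduces to the elementary inequality that $|c_{\X}(\theta)| + |c_{\Y}(\theta)| + |c_{\Z}(\theta)| \le 1$, or more precisely that the four signed sums above are nonnegative for all $\theta \in [0,\pi/2]$. This is the step I expect to be the main (though still routine) obstacle: it is a single-variable trigonometric inequality that must be checked carefully, possibly by writing everything in terms of $s = \sin\theta$, clearing denominators, and confirming nonnegativity of the resulting polynomial on $[0,1]$. Once that inequality is in hand, complete positivity on each block follows from the Choi criterion, and complete positivity of $\Lambda_B$ follows by the Jordan-block decomposition, completing the proof.
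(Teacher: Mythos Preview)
Your proposal is correct and, once you drop the measure-and-prepare framing (which you never actually use), it is essentially the paper's own argument: reduce via Jordan's lemma to a single $2\times 2$ block, compute the map explicitly there, and check that the Choi matrix---which is Bell-diagonal because $E_{\X}, E_{\Y}, E_{\Z}$ become scalar multiples of $\X, \Y, \Z$---has nonnegative eigenvalues, i.e.\ verify the four trigonometric inequalities $1 \pm c_{\X}(\theta) \pm c_{\Y}(\theta) \pm c_{\Z}(\theta) \geq 0$. One small remark: in the lemma as stated there is no factor $\tfrac12$ in front of $\ave{\I,\rho}\,\I$, so $\Lambda_B$ is not trace-preserving (it doubles the trace), but this is irrelevant for complete positivity.
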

\begin{proof}
To show that $\Lambda_{B}$ is completely positive we compute the corresponding Choi operator and prove that it is positive semidefinite. The unnormalised Choi operator is defined as
\begin{equation*}
C := ( \id_{A} \otimes \Lambda_{B} ) ( \ketbraq{\Omega}_{AB} ),
\end{equation*}
where $\ket{\Omega}_{AB} = \sum_{j = 1}^{d} \ket{j}_{A} \ket{j}_{B}$ is the standard (unnormalised) maximally entangled state of local dimension $d$. An explicit calculation gives
%
\begin{equation*}
C = \I \otimes \I + E_{\X}^{*} \otimes \X + E_{\Y}^{*} \otimes \Y + E_{\Z}^{*} \otimes \Z.
\end{equation*}
Since taking a (total) transpose does not affect the eigenvalues, it suffices to prove that $C\tran \geq 0$ and note that
\begin{equation*}
C\tran = \I \otimes \I + E_{\X} \otimes \X - E_{\Y} \otimes \Y + E_{\Z} \otimes \Z,
\end{equation*}
because the operators $E_{\X}, E_{\Y}, E_{\Z}$ are Hermitian.

The fact that $R$ and $S$ can be written in a block-diagonal form where the blocks are of size at most $2 \times 2$ implies that the same property holds for $E_{\X}, E_{\Y}$ and $E_{\Z}$. This means that to ensure that $C\tran \geq 0$, it suffices to check positivity for all possible observables in $d = 2$. Using the parametrisation given in Eq.~\eqref{eq:two-observables} we obtain
\begin{equation*}
C\tran = \I \otimes \I + \frac{ 3 \cos \theta - \cos 3 \theta }{2 \q} \, \X \otimes \X - \sin 2 \theta \, \Y \otimes \Y + \frac{ 3 \sin \theta + \sin 3 \theta }{2 \q} \, \Z \otimes \Z.
\end{equation*}
Clearly, this operator is diagonal in the Bell basis and the eigenvalues can be computed analytically. It is a simple exercise to check that the resulting trigonometric functions are non-negative on the interval $\theta \in [0, \pi/2]$.
\end{proof}
\subsection{Constructing an extraction channel from two observables}
Here we present two distinct ways of constructing a qubit extraction channel out of two binary observables acting on an unknown Hilbert space. Let $R$ and $S$ be binary observables corresponding to projective measurements, i.e.~Hermitian operators acting on $\amsbb{C}^{d}$ satisfying $R^{2} = S^{2} = \I$. It is well-known that if the observables anticommute $\{ R, S \} = 0$, they identify a qubit within $\amsbb{C}^{d}$. Our goal is to find simple constructions of linear maps that give rise to valid quantum channels for all choices of $R$ and $S$, while for observables satisfying $\{R, S\} = 0$ extract the desired qubit.

\textbf{Construction A.} The standard swap isometry is defined through the following circuit:
\begin{center}
\begin{tikzpicture}[scale=1, line width=0.8]
\node at (-4.75, 0) {$( \ket{0} + \ket{1} ) /\q$};
\draw (-3.5, 0) -- (3.5, 0);
\draw[fill=white] (-0.5, -0.5) rectangle (0.5, 0.5);
\node at (0, 0) {$H$};
\draw[fill=black] (-2, 0) circle (0.1);
\draw[fill=black] (2, 0) circle (0.1);
\node at (4, 0) {$\amsbb{C}^{2}$};
\draw (-2, 0) -- (-2, -2);
\draw (2, 0) -- (2, -2);
\node at (-4, -2) {$\rho$};
\draw (-3.5, -2) -- (3.5, -2);
\draw[fill=white] (-2.5, -2.5) rectangle (-1.5, -1.5);
\node at (-2, -2) {$R$};
\draw[fill=white] (1.5, -2.5) rectangle (2.5, -1.5);
\node at (2, -2) {$S$};
\node at (4, -2) {$\amsbb{C}^{d}$};
%
%
\end{tikzpicture}
\end{center}
The circuit corresponds to a concatenation of an isometry $V_{1} : \amsbb{C}^{d} \to \amsbb{C}^{2} \otimes \amsbb{C}^{d}$ and two unitaries $V_{2}, V_{3} : \amsbb{C}^{2} \otimes \amsbb{C}^{d} \to \amsbb{C}^{2} \otimes \amsbb{C}^{d}$ defined as
\begin{align*}
V_{1} &:= \frac{1}{\q} \big( \ket{0} \otimes \I + \ket{1} \otimes R \big),\\
V_{2} &:= H \otimes \I,\\
V_{3} &:= \ketbraq{0} \otimes \I + \ketbraq{1} \otimes S,
\end{align*}
where $H$ is the Hadamard matrix. It is easy to check that
%
\begin{align*}
%
%
V_{1} \ket{\psi} &= \frac{1}{\q} \ket{0} \ket{\psi} + \frac{1}{\q} \ket{1} R \ket{\psi},\\
V_{2} V_{1} \ket{\psi} &= \frac{1}{2} \ket{0} ( \I + R ) \ket{\psi} + \frac{1}{2} \ket{1} ( \I - R ) \ket{\psi},\\
V_{3} V_{2} V_{1} \ket{\psi} &= \frac{1}{2} \ket{0} ( \I + R ) \ket{\psi} + \frac{1}{2} \ket{1} S ( \I - R ) \ket{\psi}.
\end{align*}
The combined isometry $V : \amsbb{C}^{d} \to \amsbb{C}^{2} \otimes \amsbb{C}^{d}$ is given by $V := V_{3} V_{2} V_{1}$ and a direct computation shows that
\begin{align*}
V \rho V\hc = &\frac{1}{4} \Big[ \ketbraq{0} \otimes ( \I + R ) \rho ( \I + R ) + \ketbra{0}{1} \otimes ( \I + R ) \rho ( \I - R ) S\\
&+ \ketbra{1}{0} \otimes S ( \I - R ) \rho ( \I + R ) + \ketbraq{1} \otimes S ( \I - R ) \rho ( \I - R ) S \Big].
\end{align*}
Let $\Lambda_{A} : \cL( \amsbb{C}^{d} ) \to \cL( \amsbb{C}^{2} )$ be the quantum channel obtained by first applying the isometry and then tracing out the second register:
\begin{equation*}
\Lambda_{A}(\rho) := \tr_{2}( V \rho V\hc ).
\end{equation*}
Writing the output of the channel in the Pauli basis gives
\begin{equation}
\label{eq:lambdaA}
\Lambda_{A}(\rho) = \frac{1}{2} \ave{ \I, \rho } \, \I + \frac{1}{4} \ave{ S - R S R, \rho} \, \X + \frac{i}{4} \ave{ [ R, S ], \rho} \, \Y + \frac{1}{2} \ave{ R, \rho } \, \Z.
\end{equation}
If the observables anticommute $\{ R, S \} = 0$, it is easy to see that the $\X$ component of the output qubit is perfectly correlated to the $S$ observable on the initial system, while the $\Z$ component is perfectly correlated to the $R$ observable.

For our purposes we need to generalise this construction. Suppose that the operator $S$ instead of satisfying $S^{2} = \I$ is only guaranteed to satisfy $S^{2} \leq \I$. Since $\I - S^{2} \geq 0$, we can find a Hermitian operator $T$ satisfying $T^{2} = \I - S^{2}$. Then, consider
\begin{equation*}
\Phi_{3}(\rho) := \sum_{j = 0}^{1} K_{j} \rho K_{j}\hc,
\end{equation*}
where the Kraus operators are given by
\begin{align*}
K_{0} &:= \ketbraq{0} \otimes \I + \ketbraq{1} \otimes S,\\
K_{1} &:= \ketbraq{1} \otimes T.
\end{align*}
Clearly, this is a valid quantum channel. Let us now consider a swap circuit in which the unitary $V_{3}$ is replaced with the channel $\Phi_{3}$. It turns out that the resulting extraction channel is given precisely by Eq.~\eqref{eq:lambdaA}. In other words, this mathematical expression corresponds to a valid quantum channel for any $S$ satisfying $S^{2} \leq \I$.

\textbf{Construction B.} Consider a linear map $\Lambda_{B} : \cL( \amsbb{C}^{d} ) \to \cL( \amsbb{C}^{2} )$ defined as
\LambdaB{\frac{1}{2}}
where
\Eoperators
This map is clearly trace preserving, while complete positivity has been proved in Lemma~\ref{lem:LambdaB-positivity}.

This construction differs from the previous one in the sense that if $\{ R, S \} = 0$, then the $\X$ component of the output qubit is maximally correlated to $(R + S) / \q$, while the $\Z$ component is maximally correlated to $(R - S) / \q$.
\subsection{Combining the two channels}
In the previous section we have given two constructions of extraction channels and let us now explain how they can be applied to our self-testing scenario.

An essential requirement is that the extraction channels produce a perfect maximally entangled state of two qubits whenever the violation is maximal. Our explicit characterisation of the optimal strategies implies that when the maximal violation is achieved we have
\begin{align*}
\ave[\Big]{ &A_{2} \otimes \frac{ B_{0} - B_{1} }{\q}, \rho_{AB} } = 1,\\
\ave[\Big]{ &\frac{ A_{0} + A_{1} }{\q} \otimes \frac{ B_{0} +	 B_{1} }{\q}, \rho_{AB} } = 1.
\end{align*}
Now if we recall how the $\X$ and $\Z$ components of the output qubit are correlated to the observables $R$ and $S$ of the input system in the two constructions we arrive at the following choice of extraction channels. Alice employs the channel $\Lambda_{A}$ corresponding to $R = A_{2}$ and
\begin{equation*}
S = \frac{1}{ 4 \q } \big[ 3 ( A_{0} + A_{1} ) - ( A_{1} A_{0} A_{1} + A_{0} A_{1} A_{0} ) \big].
\end{equation*}
The fact that $S^{2} \leq \I$ follows immediately from Lemma~\ref{lem:Fx-norm}.
%
%
%
%
%
At the same time Bob employs the channel $\Lambda_{B}$ with $R = B_{0}$ and $S = B_{1}$. Let us denote the output two-qubit state by
\begin{equation*}
\sigma_{A'B'} := (\Lambda_{A} \otimes \Lambda_{B}) ( \rho_{AB} )
\end{equation*}
and our goal is to evaluate the fidelity between $\sigma_{A'B'}$ and the standard maximally entangled state $\Phi^{+}$. Since $\Phi^{+}$ is a pure state, we have
\begin{equation*}
F( \sigma_{A'B'}, \Phi^{+} ) = \ave{ \sigma_{A'B'}, \Phi^{+} }.
\end{equation*}
It is convenient to write $\Phi^{+}$ in the basis of Pauli matrices and evaluate each term separately. A direct calculation shows that for $\mathsf{P} \in \{ \X, \Y, \Z \}$ we have
\begin{equation*}
\ave{ \sigma_{A'B'} , \mathsf{P} \otimes \mathsf{P} } = \avrh{ C_{\mathsf{P}} }{},
\end{equation*}
where
%
%
%
%
\begin{align*}
&C_{\X} := \frac{1}{64} \big[ 3 ( A_{0} + A_{1} ) - ( A_{1} A_{0} A_{1} + A_{0} A_{1} A_{0} ) - 3 A_{2} ( A_{0} + A_{1} ) A_{2} + A_{2} ( A_{1} A_{0} A_{1} + A_{0} A_{1} A_{0} ) A_{2} \big]\\
&\qquad \otimes \big[ 3 ( B_{0} + B_{1} ) - ( B_{1} B_{0} B_{1} + B_{0} B_{1} B_{0} ) \big],\\
&C_{\Y} := \frac{- 1}{ 16 \q } \, \big[ A_{2}, 3 ( A_{0} + A_{1} ) - ( A_{1} A_{0} A_{1} + A_{0} A_{1} A_{0} ) \big] \otimes [B_{0}, B_{1}],\\
&C_{\Z} := \frac{1}{4 \q} \, A_{2} \otimes \big[ 3 ( B_{0} - B_{1} ) - ( B_{1} B_{0} B_{1} - B_{0} B_{1} B_{0} ) \big].
\end{align*}
%
%
%
%
%
Conveniently, it is not necessary to provide bounds on all three terms, because for every two-qubit state $\tau_{A'B'}$ we have
\begin{equation}
\label{eq:cross-terms-sum}
\ave{ \tau_{A'B'} , \X \otimes \X } + \ave{ \tau_{A'B'} , \Y \otimes \Y } + \ave{ \tau_{A'B'} , \Z \otimes \Z } \leq 1.
\end{equation}
To see this note that applying a correlated Pauli twirl to $\tau_{A'B'}$ produces a Bell-diagonal state without affecting the coefficients of the terms $\X \otimes \X$, $\Y \otimes \Y$ and $\Z \otimes \Z$ (see Lemma 10 in the supplementary information of Ref.~\cite{pfister16a} for more details). Positivity of the resulting density matrix immediately implies the condition given in Eq.~\eqref{eq:cross-terms-sum}. This means that
\begin{equation}
\label{eq:fidelity-lower-bound}
\begin{aligned}
F( \sigma_{A'B'}, \Phi^{+} ) &= \frac{1}{4} \big( 1 + \ave{ \sigma_{A'B'}, \X \otimes \X } - \ave{ \sigma_{A'B'}, \Y \otimes \Y } + \ave{ \sigma_{A'B'}, \Z \otimes \Z } \big)\\
&\geq \frac{1}{2} \big( \ave{ \sigma_{A'B'}, \X \otimes \X } + \ave{ \sigma_{A'B'}, \Z \otimes \Z } \big) = \frac{1}{2} \big( \avrh{ C_{\X} }{} + \avrh{ C_{\Z} }{} \big).
\end{aligned}
\end{equation}
In Appendix~\ref{app:analytic-bound} we derive analytic lower bounds on $\avrh{ C_{\X} }{}$ and $\avrh{ C_{\Z} }{}$ in terms of the observed violation, which lead to Theorem~2 in the main text.

Analogously, for every two-qubit state we have
\begin{equation}
- \ave{ \tau_{A'B'} , \X \otimes \X } - \ave{ \tau_{A'B'} , \Y \otimes \Y } + \ave{ \tau_{A'B'} , \Z \otimes \Z } \leq 1,
\end{equation}
which implies that
\begin{equation}
\label{eq:fidelity-lower-bound2}
F( \sigma_{A'B'}, \Phi^{+} ) \geq \frac{1}{2} \big( - \ave{ \sigma_{A'B'}, \Y \otimes \Y } + \ave{ \sigma_{A'B'}, \Z \otimes \Z } \big) = \frac{1}{2} \big( - \avrh{ C_{\Y} }{} + \avrh{ C_{\Z} }{} \big).
\end{equation}
This bound turns out to be more useful for the numerical calculations using the swap method given in Appendix~\ref{app:swap-method}.
\section{Robust self-testing}
\label{app:robust-self-testing}
In this appendix we derive robust self-testing bounds for the case of $\alpha = \q$. In the first part we derive analytic statements, whereas at the end we give some details on the numerical calculations performed using the swap method.
\subsection{Preliminaries}
Our main task is to bound norms of certain operators. We denote the Frobenius norm (Schatten 2-norm) by $\norm{\cdot}_{F}$ and the operator norm (Schatten $\infty$-norm) by $\norm{\cdot}_{\infty}$. Let us first state a couple of facts that we will take advantage of in the argument.

The Cauchy--Schwarz inequality for linear operators $X$ and $Y$ reads
\begin{equation}
\label{eq:cauchy-schwarz-inequality}
\abs{ \ave{ X, Y } } \leq \norm{X}_{F} \cdot \norm{Y}_{F}.
\end{equation}
We will often use this inequality in situations where one of the operators is a normalised quantum state. Note that then we have $\ave{L, \rho} = \ave{L \rho^{1/2}, \rho^{1/2}}$, which implies
\begin{equation}
\label{eq:cauchy-schwarz-spec}
\abs{ \ave{L, \rho} } \leq \norm[\big]{ L \rho^{1/2} }_{F}.
\end{equation}
Moreover, we will use the fact that
\begin{equation}
\label{eq:norm-inequality}
\norm{X Y}_{F} \leq \norm{X}_{F} \cdot \norm{Y}_{\infty}.
\end{equation}
This can be easily seen from the fact that
\begin{equation*}
\norm{X Y}_{F}^{2} = \tr ( X Y Y\hc X\hc ) \leq \norm{Y}_{\infty}^{2} \tr ( X X\hc ) = \norm{Y}_{\infty}^{2} \cdot \norm{X}_{F}^{2},
\end{equation*}
where we have used the fact that $Y Y\hc \leq \norm{Y}_{\infty}^{2} \, \I$ and that $A \geq B$ implies $\tr A  \geq \tr B$. We will also use the reverse triangle inequality which states that for any norm we have
\begin{equation}
\label{eq:reverse-triangle-inequality}
\abs[\big]{ \norm{X} - \norm{Y} } \leq \norm{X - Y}.
\end{equation}
Moreover, if $X^{2} = \I$, then
\begin{equation}
\label{eq:flip-identity}
(Y + X Y X)^{2} = \{ X, Y \}^{2}.
\end{equation}
\subsection{Conditions from the sum-of-squares decomposition}
Recall that for $\alpha = \q$ we have
\begin{equation*}
W = ( A_{0}^{2} + A_{1}^{2} + A_{2}^{2} ) \otimes \I + \I \otimes ( B_{0}^{2} + B_{1}^{2} + B_{2}^{2} ) - \frac{1}{2} \sum_{j = 0}^{2} L_{j}^{2}.
\end{equation*}
Clearly, if the observed violation equals $\beta = \avrh{W}{} = 6 - \varepsilon$, we can immediately deduce that
\begin{equation*}
\ave{ A_{x}^{2}, \rho_{A} } \geq 1 - \varepsilon
\end{equation*}
and
\begin{equation*}
\sum_{j = 0}^{2} \ave{ L_{j}^{2}, \rho_{AB} } \leq 2 \varepsilon.
\end{equation*}
The latter implies that
\begin{equation}
\label{eq:Vj-Frobenius}
\fnrhh{ L_{j} } = \sqrt{ \avrh{ L_{j}^{2} }{} } \leq \sqrt{ 2 \varepsilon }
\end{equation}
for $j = 0, 1, 2$.
\subsection{Analytic self-testing bounds}
\label{app:analytic-bound}
In this section we derive several robust self-testing statements. The techniques are elementary, but the proofs can be lengthy. To improve the readability we have divided the argument up into several lemmas.
\lemak{lem:A0-A1-commute}{the observables $A_{0}$ and $A_{1}$ approximately anticommute and, moreover, the operators $(A_{0} - A_{1})$ and $B_{2}$ are almost perfectly correlated}
{
\begin{equation*}
\AAanticomm
\end{equation*}
and
\begin{equation*}
\avrh{ ( A_{0} - A_{1} ) \otimes B_{2} }{} \geq \q - \sqrt{ 2 \varepsilon }.
\end{equation*}
}
\begin{proof}
Equation~\eqref{eq:Vj-Frobenius} applied to $L_{1}$ implies that
\begin{equation*}
\fnrhh{ [ ( A_{0} - A_{1} ) \otimes \I - \q \, \I \otimes B_{2} ] \,} \leq \sqrt{2 \varepsilon}.
\end{equation*}
If we multiply the operator under the norm by $\q \, \I \otimes B_{2}$ and then apply Eq.~\eqref{eq:norm-inequality} we conclude that
\begin{equation}
\label{eq:A0-A1}
\fnrhh{ [ \q ( A_{0} - A_{1} ) \otimes B_{2} - 2 \, \I \otimes \I ] \,} \leq 2 \sqrt{\varepsilon}.
\end{equation}
Alternatively, if we multiply the same operator by $( A_{0} - A_{1} ) \otimes \I$, we obtain
\begin{equation*}
\fnrhh{ [ 2 \, \I \otimes \I - \{A_{0}, A_{1} \} \otimes \I - \q ( A_{0} - A_{1} ) \otimes B_{2} ] \,} \leq 2 \sqrt{ 2 \varepsilon }.
\end{equation*}
These two inequalities allow us to apply the reverse triangle inequality to
\begin{align*}
X &= \{A_{0}, A_{1} \} \otimes \I \, \rho_{AB}^{1/2},\\
Y &= [ 2 \, \I \otimes \I - \q ( A_{0} - A_{1} ) \otimes B_{2} ] \, \rho_{AB}^{1/2},
\end{align*}
which gives the first inequality stated in the lemma. Inequality~\eqref{eq:A0-A1} together with the variant of the Cauchy--Schwarz inequality stated in~Eq.~\eqref{eq:cauchy-schwarz-spec} gives the second inequality stated in the lemma.
\end{proof}
\begin{cor}
\label{cor:B0-B1-commute}
Since the Bell inequality is symmetric with respect to swapping Alice and Bob, we immediately deduce that if $\ave{ W, \rho_{AB} } = 6 - \varepsilon$, then
\begin{equation*}
\fnrhh{ \I \otimes \{B_{0}, B_{1} \} \,} \leq 2 \big( 1 + \q \big) \sqrt{\varepsilon}
\end{equation*}
and
\begin{equation*}
\avrh{ A_{2} \otimes ( B_{0} - B_{1} ) }{} \geq \q - \sqrt{ 2 \varepsilon }.
\end{equation*}
\end{cor}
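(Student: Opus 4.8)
The plan is to invoke the exchange symmetry of the Bell functional and re-run the proof of Lemma~\ref{lem:A0-A1-commute} with the roles of Alice and Bob interchanged. First I would record the symmetry precisely: inspecting the eight correlators in Eq.~\eqref{eq:Bell-functional} one sees that the relabelling $A_{x} \mapsto B_{x}$, together with the exchange of the two tensor factors, merely permutes these terms, so the Bell operator $W$ of Eq.~\eqref{eq:Bell-operator} is invariant. At the level of the sum-of-squares decomposition this shows up as $L_{0} \mapsto -L_{0}$ and $L_{1} \leftrightarrow -L_{2}$, hence $\sum_{j} L_{j}^{2}$ is invariant and so is every consequence of it derived so far, in particular Eq.~\eqref{eq:Vj-Frobenius}. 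The hypothesis $\avrh{W}{} = 6 - \varepsilon$ is symmetric for the same reason, and the standing assumptions (projectivity of all observables and full-rank reduced states after truncation) are manifestly symmetric.

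Next I would apply Lemma~\ref{lem:A0-A1-commute} to the relabelled scenario in which Alice holds the observables $(B_{0}, B_{1}, B_{2})$, Bob holds $(A_{0}, A_{1}, A_{2})$, and the shared state is obtained from $\rho_{AB}$ by exchanging the two registers. By the previous paragraph every hypothesis of the lemma is met, so its two conclusions hold with $A_{x}$ replaced by $B_{x}$ throughout and with the tensor factors swapped. Reading these back in the original register ordering --- and using that exchanging the registers is a unitary isomorphism, hence preserves both the Frobenius norm $\fnrhh{\cdot}$ and the Hilbert--Schmidt pairing $\avrh{\cdot}{}$ --- yields exactly
\[
\fnrhh{ \I \otimes \{ B_{0}, B_{1} \} \,} \leq 2 \big( 1 + \q \big) \sqrt{\varepsilon}
\qquad \text{and} \qquad
\avrh{ A_{2} \otimes ( B_{0} - B_{1} ) }{} \geq \q - \sqrt{ 2 \varepsilon }.
\]

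I expect no real obstacle here; the argument is essentially a one-line appeal to symmetry. The only point that deserves a moment of care is confirming that the symmetry is exact at the level of the operator $W$ \emph{and} of its sum-of-squares decomposition --- that the swap genuinely permutes the eight terms of $\beta$, rather than merely fixing the quantum value --- which is precisely the content of the identifications $L_{0} \mapsto -L_{0}$ and $L_{1} \leftrightarrow -L_{2}$ noted above, together with the observation that the operators $E_{\X}, E_{\Y}, E_{\Z}$ and the truncation procedure treat the two parties on an equal footing.
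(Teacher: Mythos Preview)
Your proposal is correct and matches the paper's approach exactly: the paper offers no separate proof for this corollary, simply invoking the Alice--Bob symmetry of the Bell functional, and you have spelled out that symmetry argument in detail (including the useful observation that the swap sends $L_{0}\mapsto -L_{0}$ and $L_{1}\leftrightarrow -L_{2}$, so that the bound~\eqref{eq:Vj-Frobenius} transfers verbatim). The closing remark about $E_{\X},E_{\Y},E_{\Z}$ is extraneous here --- those operators belong to the extraction-channel construction, not to this corollary --- but it does no harm.
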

\lemak{lem:(A0+A1)x(B0+B1)}{the operators $(A_{0} + A_{1})$ and $(B_{0} + B_{1})$ are almost perfectly correlated}
{
\begin{equation*}
\avrh{ ( A_{0} + A_{1} ) \otimes ( B_{0} + B_{1} ) }{} \geq 2 - 2 \big( 1 + 2 \q \big) \sqrt{\varepsilon}.
\end{equation*}
}
\begin{proof}
Equation~\eqref{eq:Vj-Frobenius} applied to $L_{0}$ implies that
\begin{equation*}
\fnrhh{ [ ( A_{0} + A_{1} ) \otimes \I - \I \otimes ( B_{0} + B_{1} ) ] } \leq \sqrt{ 2 \varepsilon }.
\end{equation*}
If we multiply the operator under the norm by $\I \otimes ( B_{0} + B_{1} )$ and then apply Eq.~\eqref{eq:norm-inequality} we conclude that
\begin{equation*}
\fnrhh{ \big[ ( A_{0} + A_{1} ) \otimes ( B_{0} + B_{1} ) - 2 \, \I \otimes \I - \I \otimes \{ B_{0}, B_{1} \} \big] } \leq 2 \sqrt{ 2 \varepsilon }.
\end{equation*}
This together with Corollary~\ref{cor:B0-B1-commute} allow us to apply the reverse triangle inequality to
\begin{align*}
X &= \big[ ( A_{0} + A_{1} ) \otimes ( B_{0} + B_{1} ) - 2 \, \I \otimes \I \big] \rho_{AB}^{1/2},\\
Y &= \I \otimes \{B_{0}, B_{1} \} \, \rho_{AB}^{1/2},
\end{align*}
which gives
\begin{equation*}
\fnrhh{ \big[ ( A_{0} + A_{1} ) \otimes ( B_{0} + B_{1} ) - 2 \, \I \otimes \I \big] } \leq 2 \big( 1 + 2 \q \big) \sqrt{ \varepsilon }.
\end{equation*}
Applying the Cauchy--Schwarz inequality given in Eq.~\eqref{eq:cauchy-schwarz-spec} concludes the proof.
\end{proof}
\lemak{lem:A0+A1-A2-commute}{the operators $(A_{0} + A_{1})$ and $A_{2}$ approximately anticommute}
{
\begin{equation*}
\AAAanticomm.
\end{equation*}
}
\begin{proof}
Note that
\begin{equation*}
\fnrhh{ \big[ ( A_{0} + A_{1} + \q A_{2} ) \otimes \I - 2 \, \I \otimes B_{0} \big] } = \fnrhh{ ( L_{0} + L_{2} ) } \leq \fnrhh{ L_{0} } + \fnrhh{ L_{2} } \leq 2 \sqrt{ 2 \varepsilon }.
\end{equation*}
Multiplying the operator under the norm by $2 B_{0}$ gives
\begin{equation*}
\fnrhh{ \big[ 2 ( A_{0} + A_{1} + \q A_{2} ) \otimes B_{0} - 4 \, \I \otimes \I \big] } \leq 4 \sqrt{ 2 \varepsilon }.
\end{equation*}
Alternatively, multiplying it by $(A_{0} + A_{1} + \q A_{2} )$ gives
\begin{equation*}
\fnrhh{ \big[ ( A_{0} + A_{1} + \q A_{2} )^{2} \otimes \I - 2 \, ( A_{0} + A_{1} + \q A_{2} ) \otimes B_{0} \big] } \leq 2 ( 1 + \q ) \sqrt{\varepsilon}.
\end{equation*}
Since
\begin{equation*}
( A_{0} + A_{1} + \q A_{2} )^{2} = 4 \, \I + \{ A_{0}, A_{1} \} + \q \{ A_{0} + A_{1}, A_{2} \},
\end{equation*}
we can apply the reverse triangle inequality to
\begin{align*}
X &= \big( \{A_{0}, A_{1} \} + \q \{ A_{0} + A_{1}, A_{2} \} \big) \otimes \I \, \rho_{AB}^{1/2},\\
Y &= [ 2 ( A_{0} + A_{1} + \q A_{2} ) \otimes B_{0} - 4 \, \I \otimes \I ] \rho_{AB}^{1/2}
\end{align*}
to obtain
\begin{equation*}
\fnrhh{ \big( \{A_{0}, A_{1} \} + \q \{ A_{0} + A_{1}, A_{2} \} \big) \otimes \I \,} \leq 2 ( 1 + 3 \q ) \sqrt{\varepsilon}.
\end{equation*}
One last application of the reverse triangle inequality combined with the first result of Lemma~\ref{lem:A0-A1-commute} gives the final result.
\end{proof}
In the last two lemmas we bound the inner products appearing in the fidelity expression given in Eq.~\eqref{eq:fidelity-lower-bound}.
\lemak{lem:Cx-bound}{the inner product $\ave{ C_{\X}, \rho_{AB}}$ is close to unity}{
\begin{equation*}
\avrh{ C_{\X} }{} \geq 1 - \big( 7 + 5 \q \big) \sqrt{\varepsilon}.
\end{equation*}
}
\begin{proof}
Let
\begin{equation*}
K := 3 ( A_{0} + A_{1} ) - ( A_{1} A_{0} A_{1} + A_{0} A_{1} A_{0} ) - 3 A_{2} ( A_{0} + A_{1} ) A_{2} + A_{2} ( A_{1} A_{0} A_{1} + A_{0} A_{1} A_{0} ) A_{2}
\end{equation*}
and note that the operator $C_{\X}$ can be written as
\begin{equation*}
C_{\X} = \frac{1}{64} K \otimes \big[ 4 ( B_{0} + B_{1} ) - ( B_{0} + B_{1} B_{0} B_{1} ) - ( B_{1} + B_{0} B_{1} B_{0} ) \big].
\end{equation*}
Therefore,
\begin{equation*}
\ave{ C_{\X}, \rho_{AB} } = \frac{1}{16} \ave[\big]{ K \otimes ( B_{0} + B_{1} ), \rho_{AB} } - \frac{1}{64} \ave[\big]{ K \otimes \big[ ( B_{0} + B_{1} B_{0} B_{1} ) + ( B_{1} + B_{0} B_{1} B_{0} ) \big], \rho_{AB}}.
\end{equation*}
The second term we can already bound since the Cauchy--Schwarz inequality and the fact that $( B_{0} + B_{1} B_{0} B_{1} )^{2} = ( B_{1} + B_{0} B_{1} B_{0} )^{2} = \{ B_{0}, B_{1} \}^{2}$ imply that
\begin{equation*}
\abs[\big]{ \avrh{ K \otimes ( B_{0} + B_{1} B_{0} B_{1} ) }{\big} } \leq \fnrhh{ K \otimes \I \,} \cdot \fnrhh{ \I \otimes \{ B_{0}, B_{1} \} }.
\end{equation*}
The first factor can be bounded by
\begin{equation*}
\fnrhh{ K \otimes \I \,} = \sqrt{ \avrh{ K^{2} }{} } \leq \norm{ K }_{\infty} \leq 16.
\end{equation*}
In the second step we write $K = 8 K_{0} + K_{1} + K_{2} - 4 K_{3} - K_{4} - K_{5}$, where
\begin{align*}
K_{0} &:= A_{0} + A_{1},\\
K_{1} &:= A_{2} ( A_{0} +  A_{1} A_{0} A_{1} ) A_{2},\\
K_{2} &:= A_{2} ( A_{1} +  A_{0} A_{1} A_{0} ) A_{2},\\
K_{3} &:= ( A_{0} + A_{1} ) + A_{2} ( A_{0} + A_{1} ) A_{2},\\
K_{4} &:= A_{0} +  A_{1} A_{0} A_{1},\\
K_{5} &:= A_{1} + A_{0} A_{1} A_{0}.
\end{align*}
Note that $\avrh{ K_{0} \otimes ( B_{0} + B_{1} ) }{} = \avrh{ ( A_{0} + A_{1} ) \otimes ( B_{0} + B_{1} ) }{}$ is precisely the term we have bounded in Lemma~\ref{lem:(A0+A1)x(B0+B1)}. To show that all the other terms approximately vanish we apply inequalities~\eqref{eq:cauchy-schwarz-spec} and \eqref{eq:norm-inequality} to obtain
\begin{equation*}
\abs[\big]{ \avrh{ K_{j} \otimes ( B_{0} + B_{1} ) }{\big} } \leq \fnrhh{ K_{j} \otimes ( B_{0} + B_{1} ) \,} \leq 2 \fnrhh{ K_{j} \otimes \I \,}.
\end{equation*}
For $j = 1, 2$ we have
\begin{equation*}
\fnrhh{ K_{j} \otimes \I \,} = \sqrt{ \ave{ A_{2} \{ A_{0}, A_{1} \}^{2} A_{2}, \rho_{AB} } } = \fnrhh{ A_{2} \{A_{0}, A_{1} \} \otimes \I \,} \leq \fnrhh{ \{A_{0}, A_{1} \} \otimes \I \,},
\end{equation*}
For $j = 3$ we use inequality~\eqref{eq:flip-identity} to obtain
\begin{equation*}
\fnrhh{ K_{3} \otimes \I \,} = \fnrhh{ \{A_{0} + A_{1}, A_{2} \} \otimes \I \,}.
\end{equation*}
Similarly, for $j = 4, 5$ we have
\begin{equation*}
\fnrhh{ K_{j} \otimes \I \,} = \fnrhh{ \{A_{0}, A_{1} \} \otimes \I \,}.
\end{equation*}
Collecting all the error terms and plugging in the bounds derived in Lemmas~\ref{lem:A0-A1-commute} and~\ref{lem:A0+A1-A2-commute} and Corollary~\ref{cor:B0-B1-commute} leads to the desired inequality.
\end{proof}
\lemak{lem:Cz-bound}{the inner product $\avrh{ C_{\Z} }{}$ is close to unity}{
\begin{equation*}
\avrh{ C_{\Z} }{} \geq 1 - \frac{1}{2} \big( 4 + \q \big) \sqrt{\varepsilon}.
\end{equation*}
}
\begin{proof}
Note that the expression for $C_{\Z}$ can be written as
\begin{equation*}
C_{\Z} = \frac{1}{4 \q} \, A_{2} \otimes \big[ 4 ( B_{0} - B_{1} ) - ( B_{0} + B_{1} B_{0} B_{1} ) + ( B_{1} + B_{0} B_{1} B_{0} ) \big],
\end{equation*}
which immediately implies that
\begin{align*}
\avrh{ C_{\Z} }{} &= \frac{1}{\q} \avrh{ A_{2} \otimes ( B_{0} - B_{1} ) }{} - \frac{1}{4 \q } \avrh{ A_{2} \otimes ( B_{0} + B_{1} B_{0} B_{1} ) }{}  + \frac{1}{4 \q } \avrh{ A_{2} \otimes ( B_{1} + B_{0} B_{1} B_{0} ) }{}\\
&\geq \frac{1}{\q} \avrh{ A_{2} \otimes ( B_{0} - B_{1} ) }{} - \frac{1}{2 \q} \fnrhh{ \I \otimes \{ B_{0}, B_{1} \} \,},
\end{align*}
where we have used the Cauchy--Schwarz inequality combined with the observation that $( B_{0} + B_{1} B_{0} B_{1} )^{2} = ( B_{1} + B_{0} B_{1} B_{0} )^{2} = \{ B_{0}, B_{1} \}^{2}$. Plugging in the bounds derived in Corollary~\ref{cor:B0-B1-commute} gives the final result of the lemma.
\end{proof}
\subsection{Details of the numerical calculation using the swap method}
\label{app:swap-method}
We construct a $100 \times 100$ moment matrix $\Gamma$, whose rows and columns correspond to $P_{j} \ket{\psi}$, where $P_{j}$ is a monomial from the set $\{ \I, A_{x}, A_{x} A_{x'} \} \otimes \{ \I, B_{y}, B_{y} B_{y'} \}$. We impose the equality conditions resulting from $A_{x}^{2} = \I$ and $B_{y}^{2} = \I$, the normalisation condition $\Gamma_{jj} = 1$ for all $j$ and positivity $\Gamma \geq 0$. Then we minimise
\begin{equation}
\avrh{ - C_{\Y} + C_{\Z} }{}
\end{equation}
subject to a fixed Bell violation $\beta = t$ for various values of $t \in [5.7, 6]$. Inequality~\eqref{eq:fidelity-lower-bound2} leads to the lower bound on the fidelity presented in Fig.~1 in the main text.

Note that this moment matrix is not sufficient to obtain a bound on $\avrh{C_{\X}}{}$, because it does not contain strings of $A_{x}$ operators of sufficient length. Therefore, if we want to bound the fidelity using the inequality given in Eq.~\eqref{eq:fidelity-lower-bound} or bound all three terms simultaneously, we must construct a larger moment matrix. While we have been able to construct a larger moment matrix, we were not able to perform the numerical optimisation on it.
\section{Randomness certification}
\label{app:randomness}
In this appendix we explain the approach we have used to study the amount of randomness generated by the Bell inequality corresponding to $\alpha = 1$.
\subsection{The trade-off between marginals and the Bell violation}
We consider the simplest device-independent scenario: the devices of Alice and Bob are produced by Eve whose goal is to predict the outcome of Alice for a particular fixed setting. It is well-known that in this case Eve does not gain anything by entangling herself with the device. In fact, if we only care about her guessing probability, she does not even need to keep any classical knowledge about the device. Certifying randomness reduces to investigating the trade-off between the bias of the local observables and the observed Bell violation and such trade-offs can be studied numerically using the Navascu{\'e}s--Pironio--Ac{\'i}n (NPA) hierarchy~\cite{navascues07a, navascues08a}. More specifically, we use the ``1 + AB'' level to investigate the maximal bias of $A_{x}$ for a fixed violation $\beta$ (by symmetry the same constraints apply to the observables of Bob). We construct a $16 \times 16$ moment matrix $\Gamma$ whose rows and columns correspond to $\{ \ket{ \psi }, A_{x} \otimes \I \ket{ \psi }, \I \otimes B_{y} \ket{ \psi }, A_{x} \otimes B_{y} \ket{ \psi } \}$. We maximise the expectation value $\ave{ A_{x} }$ subject to a fixed Bell violation $\beta = t$ for various choices of $t \in [4, 5]$. To find feasible points we start with some optimal arrangement of the observables for Alice and Bob (as given in the main text) and consider a tilted version of the Bell operator: $r A_{x} \otimes \I + W$ for some $r \geq 0$. Finding the eigenvector corresponding to the largest eigenvalue gives a particular realisation for which typically $\beta = \ave{W} < 5$ and $\ave{A_{x}} > 0$. By generating a sufficient number of points and then taking their convex hull we construct the lower curves presented in Fig.~\ref{fig:randomness}.
Clearly, the upper and lower bounds turn out to be relatively close and they are consistent with our analytic result that the maximal violation certifies maximal randomness. Randomness produced by $A_{0}$ and $A_{1}$ can be certified all the way down to the classical value $\beta = 4$. Randomness of $A_{2}$, on the other hand, is only guaranteed for $\beta > 2 \sqrt{5}$ and we have indeed found a quantum realisation that achieves $\beta = 2 \sqrt{5}$ while keeping $A_{2}$ deterministic (see below). Clearly, $A_{0}$ and $A_{1}$ are better suited for generating randomness than $A_{2}$.
\begin{figure}[h]
\subfloat[]{\includegraphics{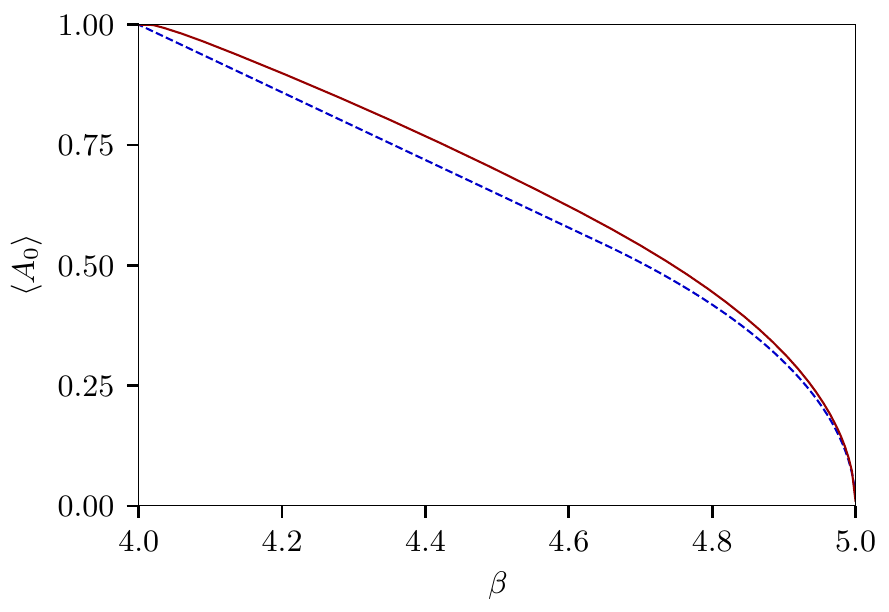}}
\subfloat[]{\includegraphics{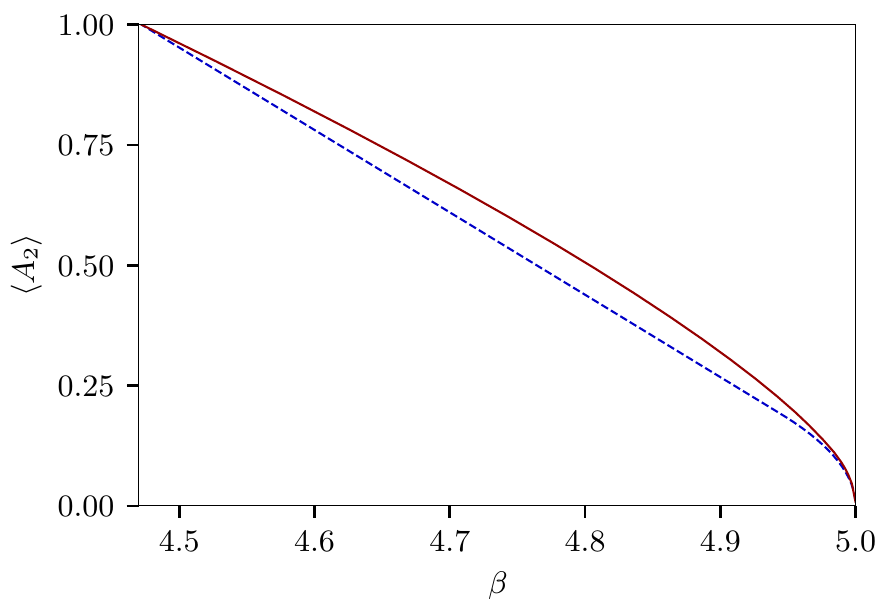}}
\caption{Numerical bounds on $\ave{A_{0}}$ and $\ave{A_{2}}$ for a fixed violation $\beta$ are shown in panels (a) and (b), respectively. The solid red lines represent the upper bounds obtained from the NPA hierarchy, while the dashed blue lines correspond to feasible points. The observable $A_{2}$ can only be used to generate randomness for violations exceeding $2 \sqrt{5}$.}
\label{fig:randomness}
\end{figure}
\subsection{Maximal violation under commutation constraints}
\label{app:maximal-violation-commutation}
Determining the minimal value of $\beta$ for which the observable $A_{x}$ is guaranteed to generate randomness is equivalent to finding the largest Bell value consistent with $\ave{ A_{x} } = \pm 1$. A related question is to determine the maximal value of $\beta$ under the assumption that certain observables commute, a problem that can be tackled numerically by imposing some additional constraints on the moment matrix. Numerical evidence suggests that:
\begin{align*}
[A_{0}, A_{1}] &= 0 \implies \beta \leq 2 \sqrt{5} \approx 4.47,\\
[A_{0}, A_{2}] &= 0 \implies \beta \leq \frac{ 2 + 3 \sqrt{6} }{2} \approx 4.67,\\
[A_{1}, A_{2}] &= 0 \implies \beta \leq \frac{ 2 + 3 \sqrt{6} }{2} \approx 4.67,\\
[A_{0}, A_{2}] = [A_{1}, A_{2}] &= 0 \implies \beta \leq \frac{ 2 + 3 \sqrt{6} }{2} \approx 4.67,\\
[A_{0}, A_{1}] = [A_{0}, A_{2}] &= 0 \implies \beta \lesssim 4.163,\\
[A_{0}, A_{1}] = [A_{1}, A_{2}] &= 0 \implies \beta \lesssim 4.163.
\end{align*}
Except for the last two cases we can provide explicit two-qubit realisations that saturate these bounds (see below). These results suggest that this Bell inequality can be used to make device-independent conclusions about the incompatibility structure of the employed observables~\cite{quintino19a}.

For $[A_{0}, A_{1}] = 0$ consider the observables
\begin{align*}
A_{0} &= \X, &\quad B_{0} &= \frac{ 2 \X + \Z }{\sqrt{5}}\\
A_{1} &= \X, &\quad B_{1} &= \frac{ 2 \X - \Z }{\sqrt{5}}\\
A_{2} &= \Z, &\quad B_{2} &= \I.
\end{align*}
It is easy to verify that $\ave{ W, \Phi^{+} } = 2 \sqrt{5}$. Note that this realisation also satisfies $[B_{0}, B_{2}] = [B_{1}, B_{2}] = 0$. Moreover, it shows that the value $\beta = 2 \sqrt{5}$ is consistent with $\ave{ B_{2} } = 1$ (and by symmetry with $\ave{ A_{2} } = 1$).

For $[A_{0}, A_{2}] = 0$ consider the observables
\begin{align*}
A_{0} &= \X, &\quad B_{0} &= \frac{ 9 \X + \sqrt{15} \, \Z }{4 \sqrt{6}}\\
A_{1} &= \frac{\X + \sqrt{15} \, \Z}{4}, &\quad B_{1} &= \frac{\X + \sqrt{15} \, \Z}{4}\\
A_{2} &= \X, &\quad B_{2} &= \frac{ \sqrt{3} \, \X - \sqrt{5} \, \Z }{ 2 \q }.
\end{align*}
It is easy to verify that $\ave{ W, \Phi^{+} } = ( 2 + 3 \sqrt{6} )/2$. A realisation satisfying $[A_{1}, A_{2}]$ can be obtained by swapping $A_{0} \leftrightarrow A_{1}$ and flipping the sign of $B_{2}$.
\bibliography{/home/jedrek/projekty/tex/library}
\end{document}